\documentclass[12pt]{article}
\usepackage{hyperref}
\usepackage{graphicx}
\usepackage{curves}
\usepackage{amssymb,amsmath,xcolor,color,framed,epstopdf,amsthm}
\usepackage[english]{babel}
\usepackage{enumerate}
\usepackage{mathrsfs}
\usepackage{color}
\usepackage{empheq}
\usepackage{braket}
\usepackage{hyperref}
\usepackage{cancel}
\usepackage{subcaption}
 \usepackage{lipsum}
 \usepackage{mathtools}
 \usepackage{MnSymbol}
\usepackage{xcolor}
\usepackage{relsize}
\usepackage{tikz}
\usetikzlibrary{shapes,snakes}
\usepackage{pgf}
\usepackage{pgflibraryarrows}
\usepackage{pgflibrarysnakes}
\usetikzlibrary{decorations.text}
\usepgfmodule{shapes}
\usetikzlibrary{decorations.pathmorphing}
\usetikzlibrary{decorations.markings}
\usetikzlibrary{patterns}
\usetikzlibrary{automata}
\usetikzlibrary{positioning}
\usepackage{pgfplots}
\usetikzlibrary{backgrounds}
\tikzset{partial ellipse/.style args={#1:#2:#3}{insert path={+ (#1:#3) arc (#1:#2:#3)} }}
\tikzset{->-/.style={decoration={ markings, mark=at position #1 with {\arrow{>}}},postaction={decorate}}}
\newtheorem{theorem}{Theorem}[section]
\newtheorem{lemma}[theorem]{Lemma}

\newtheorem{assumption}{Assumption}[section]

\newtheorem{prop}[theorem]{Proposition}
\newtheorem{thm}[theorem]{Theorem}
\newtheorem{RHP}{Riemann--Hilbert problem}
\theoremstyle{definition}
\newcommand{\R}{\mathbb{R}}

\newcommand{\C}{\mathbb{C}}

\newcommand{\be}{\begin{equation}}
\newcommand{\ee}{\end{equation}}
\newcommand{\bea}{\begin{eqnarray}}
\newcommand{\eea}{\end{eqnarray}}

\def\XXint#1#2#3{{\setbox0=\hbox{$#1{#2#3}{\int}$}
     \vcenter{\hbox{$#2#3$}}\kern-.5\wd0}}

\def\d{{\rm d}}

\def\i{{\rm i}}
\def\1{\operatorname{Id}}

\def\Res{\mathop{{\rm Res}}}

\def\Re{\operatorname{Re}}
\def\Im{\operatorname{Im}}
\def\exp{\operatorname{exp}}
\def\Bes{\mathrm{Bes}}

\def\le{\left}
\def\ri{\right}

\numberwithin{equation}{section}

\title{ \LARGE\bf Large-space and Large-time Asymptotics for the Focusing Nonlinear Schr\"{o}dinger Soliton Gas}
\author{\hspace{0.6 cm}{}Dedi Yan$^{a}$, Xianguo Geng$^{a,b}$ \footnote{\footnotesize
		Corresponding author.{\sl E-mail address}: xggeng@zzu.edu.cn}, Wei Jiao\\ 
	\leftline{\hspace{0.6 cm}{\small{\sl $^{a}$ School of Mathematics and Statistics, Zhengzhou University, 100 Kexue Road, Zhengzhou, }}}\\
	\leftline{\hspace{0.6 cm}{\small{\sl \quad Henan 450001, People's Republic of China}}}\\
	\leftline{\hspace{0.6 cm}{\small{\sl $^{b}$ School of Mathematics and Statistics, North China University of Water Resources }}}\\
	\leftline{\hspace{0.6 cm}{\small{\sl \quad and Electric Power, Zhengzhou, Henan 450011, People's Republic of China}}}}
\date{}

\begin{document}

\maketitle
\begin{abstract}
We investigate the large-space and large-time asymptotic behavior of a soliton gas for the focusing nonlinear Schr\"odinger equation. The soliton gas is constructed as the continuum limit of pure $N$-soliton solutions as $N\to\infty$, with the discrete spectrum confined to two segments $\Sigma_1$ and $\Sigma_2$. In particular, our framework does not require the discrete spectrum to be confined to the imaginary axis. By combining the nonlinear steepest descent method with an appropriate $g$-function mechanism, we show that, as $x\to-\infty$, the soliton gas is asymptotically described by a finite-gap elliptic solution with constant coefficients. In the large-time regime $t\to+\infty$, we assume that the endpoint $F$ lies on the trajectory of $H(\xi)$ with $\xi=\frac{x}{2t}\in(-E_1-\sqrt{2}E_2,-E_1)$, namely, $F=H(\hat{\xi})$, $\hat{\xi}\in (-E_1-\sqrt{2}E_2,-E_1)$. Under this assumption, we prove that the solution exhibits distinct asymptotic behaviors in different regions of the variable $\xi=\frac{x}{2t}$. More precisely, there exist an exponentially decaying region $\xi\in(-E_1,+\infty)$, a modulated elliptic-wave region $\xi\in(\hat{\xi},-E_1)$, and an unmodulated elliptic-wave region $\xi\in(-\infty,\hat{\xi})$.
\\

\noindent{\sl Keywords:} nonlinear steepest descent method; focusing nonlinear Schr\"odinger equation; soliton gas; large-space and large-time asymptotics
\end{abstract}

\section{Introduction}

The concept of a soliton gas, understood as an infinite statistical ensemble of interacting solitons, was first introduced by Zakharov for the Korteweg--de Vries (KdV) equation \cite{Zak71}. It was later extended to the focusing nonlinear Schr\"odinger (NLS) equation by El and Tovbis \cite{E4,TW2022}. Soliton gas is closely connected with integrable nonlinear partial differential equations, such as the KdV equation, the Camassa--Holm equation, and the focusing NLS equation, all of which are amenable to the inverse scattering transform \cite{GGKM}.

In recent years, the asymptotic analysis of soliton gases has attracted considerable attention. Girotti \emph{et al.} carried out a detailed study of the large-space and large-time asymptotics of the KdV soliton gas in \cite{Girotti-1}, based on the RH formulation of the corresponding $N$-soliton solutions. They further investigated the behavior of a dense mKdV soliton gas and its large-time dynamics in the presence of a single trial soliton \cite{Girotti-2}. This approach was motivated by the notion of primitive potentials introduced by Dyachenko, Zakharov, and collaborators \cite{DZZ}. Then Wang \emph{et al.} analyzed the genus two KdV soliton gas \cite{Wang}. The large-space and large-time asymptotics of soliton gas solutions for the Camassa--Holm equation were studied in \cite{GYJ}. Zhang \emph{et al.} analyzed the large-$x$ asymptotics of the focusing NLS soliton gas with discrete spectrum confined to the imaginary axis, as well as the mKdV soliton gas under nonzero boundary conditions \cite{HXF,ZXE2}. Related phenomena of soliton and breather shielding for the focusing NLS equation were investigated by Bertola and collaborators in \cite{Grava-3,NLSGAS,FG}. The work of Gkogkou, Mazzuca and McLaughlin provides a rigorous description of the formation of a soliton-gas condensate from \(N\)-soliton solutions in the condensate limit \(N\to\infty\), with \((x,t)\) restricted to compact sets~\cite{GMM}.  However, despite these recent developments, the large-time asymptotic analysis of the focusing NLS soliton gas has remained open. In this paper, we resolve this problem by developing a suitable \(g\)-function mechanism within the nonlinear steepest descent framework.  In particular, our analysis is carried out in the regimes \(x\to\pm\infty\) and \(t\to+\infty\) with the self-similar variable \(\xi=x/(2t)\) fixed, and it reveals a genuine transition structure consisting of an exponentially decaying region, a modulated elliptic-wave region, and an unmodulated elliptic-wave region. 

In this paper, we study the large-space and large-time asymptotics of a soliton gas for the focusing NLS equation
\begin{align}\label{NLS}
	iq_{t}+\frac{1}{2}q_{xx}+|q|^2q=0,\qquad -\infty<x<+\infty,\quad t>0,
\end{align}
which is one of the fundamental models in mathematical physics and integrable systems. The focusing NLS equation has been extensively studied because of both its rich mathematical structure and its broad physical relevance. It arises, for example, as a canonical model for pulse propagation in optical fibers and for one-dimensional wave propagation in deep water. The semiclassical, or zero-dispersion, limit of the focusing NLS equation has been studied in \cite{TA3,TA1,TA2}. Buckingham and Venakides characterized the long-time asymptotics for a shock-type problem of the focusing NLS equation by means of the nonlinear steepest descent method \cite{BR,DeiftZ,CMP,GL}. Long-time asymptotics for the focusing NLS equation with nonzero boundary conditions and the nonlinear stage of modulational instability were analyzed by Biondini \emph{et al.} in \cite{BGL,BG}. Bilman and Miller developed a robust inverse scattering transform for the focusing NLS equation with nonzero boundary conditions and used it to place the Peregrine solution and related higher-order rogue waves into an inverse-scattering framework \cite{BD}. Boutet de Monvel and collaborators studied the long-time asymptotics of the focusing NLS equation with step-like oscillatory background in a series of works \cite{Boutet1,Boutet2,Boutet3,Boutet4,Boutet5}.

From a physical point of view, a soliton gas may be regarded as a dense nonlinear wave field formed by a large number of bright solitons. While a single bright soliton represents a localized coherent wave packet, a soliton gas arises when the number of solitons becomes large and the associated discrete spectrum condenses onto continuous spectral arcs. In this limit, the resulting wave field behaves as a nonlinear statistical ensemble of interacting solitons. The main purpose of the present paper is to describe the macroscopic organization of such a wave field in the large-space and large-time regimes. In particular, our asymptotic analysis shows how the focusing NLS soliton gas separates into distinct self-similar regions, including an exponentially small region, a modulated elliptic-wave region, and an unmodulated elliptic-wave region. 

For integrable initial-value problems such as \eqref{NLS}, the inverse scattering transform reduces the nonlinear evolution to a matrix Riemann--Hilbert (RH) problem in the spectral variable. In the present work, we begin with a pure-soliton RH problem whose discrete spectrum is distributed along the two segments $\Sigma_1\cup\Sigma_2$; see Fig.~\ref{contour1}. In particular, our framework does not require the discrete spectrum to be confined to the imaginary axis. By taking the limit as the number of solitons tends to infinity, we arrive at an RH problem describing the associated soliton gas. We then employ the nonlinear steepest descent method, together with an appropriate $g$-function mechanism \cite{DeiftItsZhou,DeiftVZ,DeiftV,Fan,Yan}, to derive explicit large-time asymptotics for the focusing NLS soliton gas.

A central step in the construction of the \(g\)-function is to prove the solvability of the nonlinear system that determines its parameters. The corresponding asymptotic region is present only when this system admits a solution. Starting from an arbitrary initial point \(E\), the system \eqref{system21}--\eqref{system23} generates an \(E\)-dependent curve, denoted by \(\gamma(E)\); see, for example, Fig.~\ref{trajectory}. We require the endpoint \(F\) to lie on this curve.
More precisely, in the large-time analysis, \(F\) is assumed to belong to the trajectory traced out by \(H(\xi)\), where \(H(\xi)\) denotes the solution of \eqref{system21}--\eqref{system23}; see Lemma~\ref{gexist}. Thus, for
$
\xi=\frac{x}{2t}\in(-E_1-\sqrt{2}E_2,-E_1),
$
we impose the condition
\[
F=H(\hat{\xi}),\qquad 
\hat{\xi}\in(-E_1-\sqrt{2}E_2,-E_1).
\]
This condition ensures the existence of the \(g\)-function required for the subsequent nonlinear steepest descent analysis.

We then replace the original phase function $\theta(k)$ by an analytic function $g(k)$ chosen so that, after suitable triangular factorizations and contour deformations, the exponentially growing jumps are transformed into constant jumps independent of $k$, while the remaining jump matrices approach the identity exponentially fast. To justify the signature table of the introduced $g$-function, we establish Lemma \ref{lem:Im-gF-monotone}, which shows that
$$
\Im \tilde{g}(F;\xi)>0,\qquad \xi\in(\hat{\xi},-E_1).
$$
This property ensures that the associated model RH problem can be solved explicitly in terms of Riemann theta functions and Abelian integrals on a suitable Riemann surface. As a consequence, one obtains a modulated elliptic-wave region in this regime.

Our main results can be summarized as follows.

\begin{thm}
As $x\to+\infty$, the soliton gas $q(x,0)$ decays exponentially to zero. As $x\to-\infty$, it admits the asymptotic representation
\begin{gather}\label{qx}
	\begin{array}{l}
		q(x,0)=-(\Im E+\Im F)f^2_{\infty}e^{2ixg_{\infty}}
		\frac{\vartheta_3\le(A(\infty)+A(E_0)+\frac{\tau}{2}+\frac{x\Omega+\Delta}{2\pi};\tau\ri)}
		{\vartheta_3\le(A(\infty)+A(E_0)+\frac{1}{2}+\frac{\tau}{2};\tau\ri)}
		\frac{\vartheta_3\le(A(\infty)-A(E_0)-\frac{1}{2}-\frac{\tau}{2};\tau\ri)}
		{\vartheta_3\le(A(\infty)-A(E_0)-\frac{\tau}{2}-\frac{x\Omega+\Delta}{2\pi};\tau\ri)}
		+\mathcal{O}\!\left(\frac{1}{x}\right),
	\end{array}
\end{gather}
where $g_{\infty}$, $\Omega$, $\Delta$, $f_{\infty}$, and $A(k)$ are defined in \eqref{ginfty}, \eqref{Omega}, \eqref{Delta}, \eqref{finfty}, and \eqref{Ak}, respectively.
\end{thm}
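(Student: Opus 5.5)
We will prove the statement by nonlinear steepest descent applied to the soliton-gas Riemann--Hilbert problem obtained as the $N\to\infty$ limit of the pure $N$-soliton problem. In that problem the jumps sit on $\Sigma_1\cup\Sigma_2$ and their Schwarz reflections $\bar\Sigma_1\cup\bar\Sigma_2$. At $t=0$ each jump has an off-diagonal entry of the form $\pm r(k)e^{\mp 2ixk}$ on the upper arcs, with the conjugate-symmetric counterpart on the lower arcs, and $q(x,0)=2i\lim_{k\to\infty}k\,M_{12}(x,0;k)$ up to the paper's normalization.

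\textbf{Case $x\to+\infty$.} We first arrange, by inverting the jump matrix on $\Sigma_1\cup\Sigma_2$ if necessary, that the off-diagonal entry carries the factor $e^{2ixk}$ on arcs in $\C^+$ (and $e^{-2ixk}$ on arcs in $\C^-$). Since $\Im k\ge c>0$ on $\Sigma_1\cup\Sigma_2$, the jump is then $I+\mathcal{O}(e^{-2cx})$ uniformly. Small-norm theory for the associated singular integral equation then gives $M=I+\mathcal{O}(e^{-2cx})$, and hence $q(x,0)$ is exponentially small.

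\textbf{Case $x\to-\infty$: the $g$-function.} Here the factor $e^{-2ixk}$ grows on the upper arcs. We introduce a $g$-function analytic off a union of bands, namely the arc from $E$ to $F$ (the convex hull portion of $\Sigma_1\cup\Sigma_2$) and its conjugate. We conjugate $M\mapsto M e^{-ixg(k)\sigma_3}$ and require the following:
\begin{enumerate}[(i)]
\item $g(k)=k+g_\infty/k+\dots$ near infinity, which produces the prefactor $e^{2ixg_\infty}$;
\item $g_++g_-=0$ on the bands, modulo a real constant $\Omega$ on the gap between $\Sigma_1$ and $\Sigma_2$;
\item the sign of $\Im(g-k)$ makes the triangular factors off the bands exponentially small.
\end{enumerate}
Concretely, $g'$ is $\big(k^2+ak+b\big)/R(k)$ with $R(k)=\sqrt{(k-E)(k-\bar E)(k-F)(k-\bar F)}$, with $a,b$ fixed by the normalization and by the vanishing of the $A$-period.

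\textbf{Factorization and model problem.} Next we factor the jump on the bands as lower--diagonal--upper using the scalar Szeg\H{o}-type function $f(k)$, which solves $f_+f_-=r$ on the bands and has the limit $f_\infty$ at infinity. This removes $r$ and leaves constant jumps of the form $\begin{pmatrix}0&1\\-1&0\end{pmatrix}$. On the gap the jump becomes $e^{i(x\Omega+\Delta)\sigma_3}$, where $\Delta$ is the corresponding constant coming from $\log f$. We then open lenses. The lens jumps decay like $e^{-c|x|}$ away from the endpoints $E,F,\bar E,\bar F$. The remaining model problem is solved on the genus-one surface of $R$ with period $\tau$ and Abel map $A$, via the standard theta-function construction. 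The factors $\vartheta_3(A(\infty)\pm A(E_0)+\dots)$ arise from the zero of $\gamma(k)=((k-E)(k-\bar F)/((k-\bar E)(k-F)))^{1/4}$-type combinations, with $E_0$ that zero. Finally, we extract the $k^{-1}$ coefficient; the factor $-(\Im E+\Im F)$ comes from the expansion of $\gamma-\gamma^{-1}$.

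\textbf{Error analysis and main obstacle.} Near the four endpoints we place local parametrices. Because $r$ has root-type (or Jacobi-type) endpoint behaviour, these are Bessel/Airy-type or built from the explicit algebraic form of $f$. Matching them to the outer solution produces an error of order $1/x$, which gives the $\mathcal{O}(1/x)$ remainder after small-norm analysis of the ratio problem.

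The main obstacle is the sign structure in step (iii). It requires $\Im g$ to have the correct inequalities on the lens boundaries, and the band endpoints must be exactly $E$ and $F$ with no additional modulation at $t=0$. We would verify this by analyzing the zeros of $k^2+ak+b$: we need to show they lie on the gap, using the reality of the periods and symmetry. This is the step we expect to need the most care, together with the endpoint parametrices if $r$ has nonstandard endpoint exponents.
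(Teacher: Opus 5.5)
Your route is the paper's Section~3: small norm for $x\to+\infty$, then for $x\to-\infty$ the following chain.
\begin{itemize}
\item The genus-one $g$ with $g'=(k^2-e_1k+c_1)/R$ and vanishing $\int_F^{F^*}$.
\item The Szeg\H{o}-type $f$, which produces $\Delta$ and $f_\infty$.
\item Lenses, a theta-function outer solution, and endpoint parametrices giving $\mathcal{O}(1/x)$.
\end{itemize}
A few details are off.
\begin{itemize}
\item $\Sigma_2$ is itself the conjugate of $\Sigma_1$. So the bands are exactly $\Sigma_1,\Sigma_2$, the gap $\Sigma_F$ joins $F$ to $F^*$, and your small-norm bound should read $|\Im k|\ge c$.
\item The conjugation should be $X\mapsto Xe^{ix(g-k)\sigma_3}$. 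This keeps $X\to I$ and turns the band entries into $e^{ix(g_++g_-)}$.
\item The lens entries are $e^{\mp 2ixg}$, so it is the sign of $\Im g$, not $\Im(g-k)$, that must be controlled.
\item The parametrices are Bessel-type at all four endpoints because $g$ has square-root behaviour there. Since $r$ is analytic and non-vanishing, no Airy or Jacobi-type analysis enters.
\end{itemize}

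The genuine problem is in the final extraction, which is exactly what the statement asserts.

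First, condition (i) must be $g(k)=k+g_\infty+\mathcal{O}(k^{-1})$. Once $g(E)=0$ and $g_++g_-=0$ on the bands, the additive constant is not free, and only a constant term yields $e^{2ixg_\infty}$.

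More seriously, the $\gamma$ you propose does not produce the stated constants. For $\gamma^4=\frac{(k-E)(k-\bar F)}{(k-\bar E)(k-F)}$ one has $\gamma-\gamma^{-1}=\frac{i(\Im F-\Im E)}{k}+\mathcal{O}(k^{-2})$. The prefactor would then be proportional to $\Im F-\Im E$. Also, $\gamma^4=1$ at $k=\frac{E_1F_2-E_2F_1}{F_2-E_2}$ rather than at $E_0=\frac{EF-E^*F^*}{E+F-E^*-F^*}$, and your choice degenerates when $\Im E=\Im F$.

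The paper instead takes $\gamma^4=\frac{(k-E^*)(k-F^*)}{(k-E)(k-F)}$. This satisfies $\gamma_+=i\gamma_-$ on $\Sigma_1\cup\Sigma_2$ and also $\gamma_+=-\gamma_-$ across $\Sigma_F$. That extra sign is cancelled by the half-period shift in the denominators $\vartheta_3\bigl(A(k)\mp(A(E_0)+\tfrac12+\tfrac\tau2)\bigr)$. Its expansion $\gamma=1+\frac{i(\Im E+\Im F)}{2k}+\dots$ is what gives the prefactor $-(\Im E+\Im F)$.

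With your $\gamma$ the model problem is still solvable, and by uniqueness it gives the same $S^\infty$, but in a different theta-function representation. To arrive at the stated formula you must either adopt the paper's $\gamma$ (with its $E_0$), or prove the identity between the two representations.
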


\begin{theorem}
Assume that $ r(k)$ satisfies the Assumption \ref{assumption1}  and $F$ lies on the trajectory of the function $H(\xi)$, where
\[
\xi=\frac{x}{2t}\in(-E_1-\sqrt{2}E_2,-E_1),\qquad E_1=\Re E,\quad E_2=\Im E.
\]
Then, in the large-time regime, the focusing NLS soliton gas exhibits the following asymptotic behavior:
\begin{enumerate}
	\item
	In the region $\xi>-E_1$, one has
	\[
	q(x,t)=\mathcal{O}(e^{-ct}),
	\]
	where $c>0$ is a constant.
	
	\item
	In the region $\hat{\xi}<\xi<-E_1$, where $H(\hat{\xi})=F$ and
	\[
	-E_1-\sqrt{2}E_2<\hat{\xi}<-E_1,
	\]
	the solution satisfies
	\begin{gather}\label{qxt1}
		\begin{array}{l}
			q(x,t)=-(\Im E+\Im H)\tilde{f}^2_{\infty}e^{2it\tilde{g}_{\infty}}
			\frac{\vartheta_3\le(\tilde{A}(\infty)+\tilde{A}(\tilde{E}_0)+\frac{\tilde{\tau}}{2}+\frac{t\tilde{\Omega}+\tilde{\Delta}}{2\pi};\tilde{\tau}\ri)}
			{\vartheta_3\le(\tilde{A}(\infty)+\tilde{A}(\tilde{E}_0)+\frac{1}{2}+\frac{\tilde{\tau}}{2};\tilde{\tau}\ri)}
			\frac{\vartheta_3\le(\tilde{A}(\infty)-\tilde{A}(\tilde{E}_0)-\frac{1}{2}-\frac{\tilde{\tau}}{2};\tilde{\tau}\ri)}
			{\vartheta_3\le(\tilde{A}(\infty)-\tilde{A}(\tilde{E}_0)-\frac{\tilde{\tau}}{2}-\frac{t\tilde{\Omega}+\tilde{\Delta}}{2\pi};\tilde{\tau}\ri)}
			+\mathcal{O}\!\left(\frac{1}{t}\right),
		\end{array}
	\end{gather}
	where $\tilde{f}_{\infty}$, $\tilde{g}_{\infty}$, $\tilde{A}(k)$, $\tilde{\Omega}$, and $\tilde{\Delta}$ are defined in \eqref{tfinfty}, \eqref{tginfty}, \eqref{tAk}, \eqref{tOmega}, and \eqref{tDelta}, respectively.
	
	\item
	In the region $\xi<\hat{\xi}$, one has
	\begin{gather}\label{qxt2}
		\begin{array}{l}
			q(x,t)=-(\Im E+\Im F){f}^2_{\infty}e^{2it\hat{g}_{\infty}}
			\frac{\vartheta_3\le({A}(\infty)+{A}({E}_0)+\frac{{\tau}}{2}+\frac{t\hat{\Omega}+{\Delta}}{2\pi};{\tau}\ri)}
			{\vartheta_3\le({A}(\infty)+{A}({E}_0)+\frac{1}{2}+\frac{{\tau}}{2};{\tau}\ri)}
			\frac{\vartheta_3\le({A}(\infty)-{A}({E}_0)-\frac{1}{2}-\frac{{\tau}}{2};{\tau}\ri)}
			{\vartheta_3\le({A}(\infty)-{A}({E}_0)-\frac{{\tau}}{2}-\frac{t\hat{\Omega}+{\Delta}}{2\pi};{\tau}\ri)}
			+\mathcal{O}\!\left(\frac{1}{t}\right),
		\end{array}
	\end{gather}
	where $f_{\infty}$, $\hat{g}_{\infty}$, $A(k)$, and $\Delta$ are defined in \eqref{finfty}, \eqref{hginfty}, \eqref{Ak}, and \eqref{Delta}, respectively.
\end{enumerate}
\end{theorem}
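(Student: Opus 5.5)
The plan is a Deift--Zhou nonlinear steepest descent analysis applied, region by region, to the soliton-gas Riemann--Hilbert problem obtained as the $N\to\infty$ limit of the $N$-soliton problem. In that problem the jumps on $\Sigma_1\cup\Sigma_2$, and on their conjugates, are triangular with entries $r(k)e^{\pm 2it\theta(k)}$, where $\theta(k)=k^2+2\xi k$ with $\xi=x/(2t)$. The first step is the same in every region: compute the signature table of $\Im\theta$, i.e. locate $\Re k=-\xi$ relative to the segments. For each region I then do the following in order: choose a $g$-function, conjugate $Y$ by $e^{itg\sigma_3}$, factorize the jump on the arcs, open lenses, and identify a model problem with constant jumps together with a small-norm error problem.

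\textbf{Region $\xi>-E_1$.} The segments lie in the region where the triangular factors $e^{\pm 2it\theta}$ decay. After possibly trading upper and lower factorization on part of the contour, all jumps become $I+\mathcal{O}(e^{-ct})$ uniformly, using the boundedness of $r$ from Assumption~\ref{assumption1}. The small-norm theory then gives $q=\mathcal{O}(e^{-ct})$ directly, with no $g$-function needed.

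\textbf{Region $\hat\xi<\xi<-E_1$.} Here I use the genus-one $g$-function $\tilde g(k;\xi)$, with $\tilde g'=(k-\alpha)\sqrt{(k-E)(k-\bar E)(k-H)(k-\bar H)}$ and $H=H(\xi)$ supplied by Lemma~\ref{gexist}. The conditions \eqref{system21}--\eqref{system23} give $\tilde g=\theta+\mathcal{O}(1/k)$ at infinity, vanishing $B$-periods, and constant jumps $\tilde\Omega$ on the cut.

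\begin{enumerate}
\item On the subarc from $E$ to $H$, factor the jump as $\mathrm{lower}\cdot\mathrm{off\text{-}diagonal}\cdot\mathrm{upper}$. The $r$-dependence is absorbed by a scalar Szeg\H{o}-type function $f$ solving $f_+f_-=r$ (or its inverse). This produces \eqref{tfinfty} and the phase constant $\tilde\Delta$.
\item Open lenses. The lens factors decay because $\Im\tilde g$ has the correct sign there.
\item On the remaining piece of the segment between $H$ and $F$, the jump is controlled because $\Im\tilde g(F;\xi)>0$ (Lemma~\ref{lem:Im-gF-monotone}), and more generally $\Im\tilde g>0$ along $[H,F]$. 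This is exactly where the restriction $\xi>\hat\xi$ is used: as $\xi\downarrow\hat\xi$ the point $H$ reaches $F$.
\item Solve the model problem, whose jumps are $\begin{pmatrix}0&1\\-1&0\end{pmatrix}$ across the cuts and $e^{i(t\tilde\Omega+\tilde\Delta)\sigma_3}$ across the gap, on the elliptic surface $w^2=(k-E)(k-\bar E)(k-\tilde H)(k-\bar{\tilde H})$. The solution is built in the standard way from the factor $\gamma(k)=((k-E)(k-\bar H)/((k-\bar E)(k-H)))^{1/4}$, the Abel map $\tilde A$, and the ratios $\vartheta_3(\tilde A(k)\pm \tilde A(\tilde E_0)+\cdots)$. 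The point $\tilde E_0$ is the zero of $\gamma\pm\gamma^{-1}$.
\item Local parametrices at $E,\bar E,H,\bar H$ (Airy or Bessel depending on the endpoint behaviour of $r$) give an error $I+\mathcal{O}(1/t)$.
\item Recover $q$ from $q=2i\lim_{k\to\infty}k\,Y_{12}$. This yields \eqref{qxt1}; the prefactor $-(\Im E+\Im H)$ comes from the $1/k$ term of $\gamma$.
\end{enumerate}

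\textbf{Region $\xi<\hat\xi$.} The band endpoint freezes at $F$, so I reuse the $x\to-\infty$ $g$-function of the first theorem, with branch points $E,F$. It is corrected to $\hat g=g^{(x)}+\,$(a $t$-dependent linear combination) chosen so that $\hat g-\theta$ is $\mathcal{O}(1/k)$. Because $F$ no longer solves the full modulation system, only the $\mathfrak{B}$-period normalization survives. The result is a $\xi$-dependent constant frequency $\hat\Omega$ on the same Riemann surface, with $A$, $\tau$, $E_0$, $\Delta$ and $f_\infty$ unchanged. After that, the model problem and parametrices are identical to those of the $x$-asymptotics, which gives \eqref{qxt2}.

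\textbf{Main obstacle.} I expect the hardest step to be verifying the inequalities in the sign table. In the middle region this means $\Im(\tilde g-\theta)$ has the right sign on the lens boundaries and on $[H,F]$. In the last region it means the frozen $\hat g$ still has $\Im\hat g>0$ on the lenses for every $\xi<\hat\xi$. For this I would track the zero $\alpha(\xi)$ of $\tilde g'$, use the monotonicity from Lemma~\ref{lem:Im-gF-monotone}, and argue by continuity from the boundary cases $\xi=\hat\xi$ and $\xi\to-\infty$.
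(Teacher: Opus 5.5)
Your overall route is the paper's. For $\xi>-E_1$ you use small norm directly. For $\hat\xi<\xi<-E_1$ you use the genus-one $\tilde g$ with moving branch point $H(\xi)$ from Lemma~\ref{gexist}, lenses, Lemma~\ref{lem:Im-gF-monotone} to control the leftover contour from $H$ to $F$, a theta-function model problem, and local parametrices. For $\xi<\hat\xi$ you freeze the surface at $E,E^*,F,F^*$ and rerun the $x\to-\infty$ analysis with $x\Omega$ replaced by $t\hat\Omega$.

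\textbf{The gap.} It sits exactly at the step you flag as the main obstacle, the lens inequalities for $\xi<\hat\xi$. The paper does not argue by continuity. It writes
\[
\hat g(k)=\check g(k)+2(\xi-\hat\xi)\,g(k),\qquad \check g(k)=2\int_E^k(\zeta-\mu(\hat\xi))\sqrt{\frac{(\zeta-F)(\zeta-F^*)}{(\zeta-E)(\zeta-E^*)}}\,d\zeta .
\]
Here $\check g$ is the middle-region $\tilde g$ frozen at $\xi=\hat\xi$ (where $H=F$), and $g$ is the $x$-asymptotics function \eqref{gk}. Both pieces satisfy $g_++g_-=0$ on $\Sigma_1\cup\Sigma_2$ and have constant jumps on $\Sigma_F$. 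Hence so does $\hat g$, with $\hat\Omega$ and $\hat g_\infty$ affine in $\xi$ and $\hat g-\theta\to\hat g_\infty$.

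The inequalities then follow in one line. On $\mathcal C_1$, $\Re[2i\check g]>0$ by \eqref{teq1} at $\xi=\hat\xi$, and $\Re[2ig]<0$ by \eqref{in1}. Since $2(\xi-\hat\xi)<0$, the sum gives $\Re[2i\hat g]>0$, i.e.\ $\Im\hat g<0$ on the upper lens, not $>0$ as you wrote. The argument on $\mathcal C_2$ is symmetric.

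Your proposed ingredients do not deliver this:
\begin{itemize}
\item Correct signs at $\xi=\hat\xi$ and as $\xi\to-\infty$ do not propagate to intermediate $\xi$ by continuity alone.
\item Lemma~\ref{lem:Im-gF-monotone} concerns $\Im\tilde g(F;\xi)$ for $\xi>\hat\xi$ and plays no role in this region.
\end{itemize}
What makes your two ``boundary cases'' enough is precisely that $\hat g$ is affine in $\xi$ and they are its two summands. Note also that the correction is $\xi$-dependent, not $t$-dependent.

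\textbf{Errors in the middle region.}
\begin{itemize}
\item Your differential $(k-\alpha)\sqrt{(k-E)(k-\bar E)(k-H)(k-\bar H)}$ grows like $k^3$, so it cannot give $\tilde g-\theta=\mathcal O(1)$. The correct one is $d\tilde g=2(k-\mu)\sqrt{(k-H)(k-H^*)/((k-E)(k-E^*))}\,dk$.
\item That form is what makes $\tilde g\sim(k-E)^{1/2}$ at $E,E^*$ (Bessel) and $\tilde g-\tilde\Omega/2\sim(k-H)^{3/2}$ at $H,H^*$ (Airy). The parametrix type is dictated by $\tilde g$, not by $r$, which is analytic and nonvanishing.
\item $H(\xi)$ is generally not on the segment $[E,F]$. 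One first uses analyticity of $r$ to replace $\Sigma_1$ by $\tilde\Sigma_1\cup\tilde\Sigma_H$. Here $\tilde\Sigma_1$ is the zero-level arc of $\Im\tilde g$ from $E$ to $H$, and $\tilde\Sigma_H$ is a path from $H$ to $F$ inside $\{\Im\tilde g>0\}$, which Lemma~\ref{lem:Im-gF-monotone} makes available. Positivity on the straight segment $[H,F]$ is neither proved nor needed.
\item Your $\gamma$ has $1/k$ coefficient $\tfrac{i}{2}(\Im H-\Im E)$, so it cannot produce the prefactor $-(\Im E+\Im H)$. It is a legitimate alternative (no sign jump across $\tilde\Sigma_\mu$), but it yields a differently normalized theta formula. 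To obtain \eqref{qxt1} as stated, use $\tilde\gamma=\bigl((k-H^*)(k-E^*)/((k-H)(k-E))\bigr)^{1/4}$. Its $1/k$ coefficient is $\tfrac{i}{2}(\Im E+\Im H)$, and its sign change across $\tilde\Sigma_\mu$ is absorbed by the theta factors.
\end{itemize}
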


The paper is organized as follows. In Section \ref{sec:2}, we begin with the pure $N$-soliton solutions of the focusing NLS equation \eqref{NLS}, whose spectrum is confined to the segments $\Sigma_1\cup\Sigma_2$, and then derive the RH problem for the corresponding soliton gas by letting $N\to\infty$. In Section \ref{sec:3}, we show that, as $x\to-\infty$, the soliton gas is asymptotically described by a finite-gap elliptic function up to an error of order $\mathcal{O}(x^{-1})$, whereas as $x\to+\infty$ it decays exponentially to zero. In Sections \ref{sec:4}--\ref{sec:6}, we establish the global long-time asymptotics of $q(x,t)$ by means of the nonlinear steepest descent method and the $g$-function mechanism. More precisely, the solution decays exponentially in the region $\xi>-E_1$, takes the form of a modulated elliptic wave in the region $\hat{\xi}<\xi<-E_1$, and is asymptotically described by a finite-gap elliptic function with unmodulated coefficients in the region $\xi<\hat{\xi}$.

\section{Soliton gas as limit of $N$-solitons as $N\to +\infty$}
\label{sec:2}
The focusing NLS equation \eqref{NLS} is a prime example of a completely integrable system and the  Lax pair of the focusing NLS equation reads
\begin{align}\label{Lax-O}
(\partial_{x}-\mathcal{L})\Phi=0,\quad \mathcal{L}=-ik\sigma_3+Q,\\
(i\partial_{t}-\mathcal{B})\Phi=0,\quad \mathcal{B}=ik\mathcal{L}+\frac{1}{2}\sigma_3(Q^2-Q_{x}),
\end{align}
where
\begin{align*}
Q=Q(x,t)=\begin{pmatrix}0&q(x,t)\\-q(x,t)^*&0  \end{pmatrix},
\end{align*}
the $q(x,t)^*$ denotes the complex conjugate of the complex potential function $q(x,t)$ and $\sigma_3$ is the third Pauli matrix $\sigma_1=\begin{pmatrix}0&1\\1&0  \end{pmatrix},\ \sigma_2=\begin{pmatrix}0&-i\\i&0  \end{pmatrix},\ \sigma_3=\begin{pmatrix}1&0\\0&-1  \end{pmatrix}$.

The RH problem for the solution of the focusing NLS equation is described as follows\cite{NLS2017}:
 \begin{RHP}
Find a matrix-valued function $M(k;x,t)$ with the following properties
\begin{enumerate}
\item $M(k;x,t)$ is meromorphic in $\C\backslash \R$ with simple poles at $\{\kappa_{j} \}_{j=1}^{N}$ in $\mathbb{C}_{+}$ and at the corresponding conjugate points $\{\kappa^{*}_{j} \}_{j=1}^{N}$ in $\mathbb{C}_{-}$.
\item The boundary values $M_{\pm}(k;x,t)=M(k\pm i\varepsilon;x,t)$ satisfies the following jump relation
\begin{align}M_{+}(k;x,t)=M_-(k;x,t)V(k),\quad k \in \mathbb{R},\end{align}
\begin{align}
V(k)= \begin{pmatrix}1+|R_1(k)|^2&R^*_1(k){e}^{-2it\theta(k)}\\R_1(k){e}^{2it\theta(k)}&1 \end{pmatrix},
\end{align}
where $R_1(k)$ is the reflection coefficient and the phase function $\theta(k)=k^2+2k\xi$ with $\xi=\frac{x}{2t}$.
\item  The residue conditions:
\begin{equation}\label{residue_soliton}
\Res_{k=\kappa_{j}} M (k)= \lim_{k \to \kappa_{j}} M(k) \begin{pmatrix} 0 & 0\\  -i\chi_{j}e^{2 i t\theta(k) } & 0 \end{pmatrix}, \ \  \Res_{k=\kappa^{*}_{j}} M(k) = \lim_{k \to \kappa^{*}_{j}} M(k) \begin{pmatrix} 0 & -i\chi^{*}_j e^{-2 i t\theta(k)}\\0 & 0\end{pmatrix},
\end{equation}
where $\chi_j\in \mathbb{C}\backslash\{0\}$.
\item The normalization:
$\displaystyle M(k;x,t)= \begin{pmatrix}1&0\\0&1 \end{pmatrix} + \mathcal{O}\le(\frac{1}{k}\ri)$ as $k \to \infty$.
\item  The symmetry condition:
\begin{align}
\label{re}
M^{*}(k^{*};x,t)=\sigma_{2}M(k;x,t)\sigma_{2}.
\end{align}
\end{enumerate}
\end{RHP}
In this paper we only consider the pure $N$-soliton solutions and the reflection coefficient $R_{1}(k)$ vanishes. The  potential $q(x,t)$ is determined from $M(k;x,t)$ via
\begin{align}\label{u-sol}
q(x,t)=2i\lim\limits_{k\to\infty}kM_{12}(k;x,t).
\end{align}
We are now interested in the limit as $N  \to + \infty$ under the additional assumptions:
\begin{assumption}\label{assumption1}
\begin{enumerate}
\item Set a segment $\Sigma_{1}$ connecting $E$ and $F$, and a segment $\Sigma_{2}$ connecting $F^*$ and $E^*$, where $E=E_1+E_2i,\ F=F_1+F_2i$,  $E_1=\Re E<0,\ E_2=\Im E>0, F_1=\Re F<0,\ F_2=\Im F>0 $ and $E_1<F_1$. As the number of poles increases to infinity, the $N$ poles are equally spaced along
$\Sigma_{1}$ with distance between two poles equal to $|\Delta k|=\frac{\sqrt{(E_1-F_1)^2+(E_2-F_2)^2}}{N}$, the oriented contours $\Sigma_{1}$ and $\Sigma_{2}$ see Fig. \ref{contour1}.
\item The coefficients $\{\chi_j\}_{j=1}^N$ are assumed to be a discretization of a given function:
\begin{gather}
\chi_j = \frac{r(\kappa_j)\sqrt{(E_1-F_1)^2+(E_2-F_2)^2}}{N\pi}, \quad j=1,\ldots, N,
\end{gather}
where $ r(k)$ is an analytic and non-vanishing in the open set containing all deformation contours and lens regions.
\end{enumerate}
\end{assumption}
For fixed $t=t_0$, as $x\to+\infty$, the exponential factors appearing in the residue conditions \eqref{residue_soliton} tend to zero exponentially. Hence the pole contributions are exponentially small, and a standard small-norm argument shows that the potential $q(x,t_0)$ decays exponentially as $x\to+\infty$. Next, we consider the case $x\to -\infty$.
Drawing on the strategy employed in \cite{Girotti-1}, we  derive the meromorphic RH problem of the focusing NLS soliton gas as the limit $N \to+\infty$ of $N$-solitons. For convenience, we introduce a closed curve $\Gamma_+$ located in the upper half-plane $\mathbb{C}{+}$, encircling the poles $\kappa_j$ in a counterclockwise direction. Similarly, $\Gamma_-$ is a clockwise contour that surrounds the poles $\kappa^*_j$ in the lower half-plane $\mathbb{C}_{-}$. We first remove the poles by defining
\begin{equation}
Z(k;x,t) =M(k;x,t)
\begin{cases}\begin{pmatrix}1&0\\ \displaystyle\sum\limits_{j=1}^{N}\frac{i\chi_j e^{2 i t\theta(k)}}{k -\kappa_{j}}&1\end{pmatrix},
      &\text{$k$ inside $\Gamma_+$},\\
 \begin{pmatrix}1& \displaystyle\sum\limits_{j=1}^{N}\frac{i\chi^*_j e^{-2it\theta(k)}}{k-\kappa^*_{j}}\\ 0 &1 \end{pmatrix},
      &\text{$k$ inside $\Gamma_-$},\\
    I,
    &\text{otherwise},
\end{cases}
\end{equation}
where $I$ is the $2\times 2$ identity matrix. Then the matrix-valued function $Z(k;x,t) $ satisfies the following jump relation
\begin{gather}\label{Zjump-discrete}
Z_+(k) = Z_-(k) \begin{cases} \begin{pmatrix} 1 & 0\\ \displaystyle \sum_{j=1}^{N}\frac{i\chi_j e^{2 it\theta(k)}}{k-\kappa_{j}} & 1\end{pmatrix}, & k \in \Gamma_+, \\
 \begin{pmatrix}1& -\displaystyle {\sum_{j=1}^{N}\frac{i\chi^*_j e^{-2it\theta(k)}}{k -\kappa^*_{j}}}\\ 0 &{1} \end{pmatrix}, & k \in \Gamma_-,
\end{cases}
\end{gather}
where, for $k\in \Gamma_+ \cup \Gamma_- $, the boundary values $Z_{+}(k)$ are taken from the left side of the contours, and the boundary values $Z_{-}(k)$ are taken from the right.
Then the following proposition holds.
\begin{prop}\label{p1}
For any open set $K_{+}$ containing the segments $\Sigma_{1}$, and any open set  $K_{-}$ containing the segment $\Sigma_{2}$,  the following limit holds uniformly for all  $k \in \mathbb{C }\backslash K_{+}$:
\begin{gather}
\lim_{N\to +\infty} \sum_{j=1}^{N}\frac{i\chi_j }{k -\kappa_{j}} =\int_{E}^{F} \frac{2ir(\zeta)}{k - \zeta}\frac{\d \zeta}{2\pi i} \,,
\end{gather}
and the following limit holds uniformly for all $k \in \mathbb{C} \backslash K_{-}$:
\begin{gather}
\lim_{N\to +\infty}\sum_{j=1}^{N}\frac{i\chi^*_j }{k-\kappa^*_{j}} = \int_{F^*}^{E^*} \frac{2i  r^*(\zeta^*)}{k - \zeta}\frac{\d \zeta}{2\pi i}.
\end{gather}
\end{prop}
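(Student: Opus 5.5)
The plan is to identify the sum as a Riemann sum for a contour integral along $\Sigma_1$. Parametrize $\Sigma_1$ by arclength: set $L=\sqrt{(E_1-F_1)^2+(E_2-F_2)^2}=|F-E|$ and $\zeta(s)=E+s\,\frac{F-E}{L}$ for $s\in[0,L]$, so that $\d\zeta=\frac{F-E}{L}\,\d s$. By Assumption~\ref{assumption1}, the poles are $\kappa_j=\zeta(s_j)$ with equally spaced nodes $s_j$ of mesh $L/N$ (for instance $s_j=(j-\tfrac12)L/N$; any fixed placement inside the $j$-th subinterval works). Substituting $\chi_j=\frac{r(\kappa_j)L}{N\pi}$ gives
\[
\sum_{j=1}^{N}\frac{i\chi_j}{k-\kappa_j}=\frac{1}{\pi}\sum_{j=1}^{N}\frac{i\,r(\zeta(s_j))}{k-\zeta(s_j)}\,\frac{L}{N}.
\]
This is a Riemann sum for $\frac{i}{\pi}\int_0^L \frac{r(\zeta(s))}{k-\zeta(s)}\,\d s$. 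Note that $\frac{2ir(\zeta)}{k-\zeta}\frac{\d\zeta}{2\pi i}=\frac{r(\zeta)}{\pi(k-\zeta)}\d\zeta$. Matching the two expressions therefore fixes the convention: the right-hand side of the proposition is read as the integral against arclength $|\d\zeta|$ along $\Sigma_1$, oriented from $E$ to $F$. Equivalently, the normalization of $\chi_j$ absorbs the unimodular factor $\frac{F-E}{L}$ (and the factor $i$). I will state explicitly which reading is used, so that the constants agree with Assumption~\ref{assumption1}.

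The substantive step is uniformity in $k\in\mathbb{C}\setminus K_+$. Set $f_k(s)=\frac{r(\zeta(s))}{k-\zeta(s)}$. Since $K_+$ is open and contains the compact segment $\Sigma_1$, we have $\delta:=\operatorname{dist}(\mathbb{C}\setminus K_+,\Sigma_1)>0$, so $|k-\zeta(s)|\ge\delta$ for all such $k$ and all $s$. The function $r$ is analytic near $\Sigma_1$, so $r$ and $r'$ are bounded on $\Sigma_1$. Hence
\[
|f_k(s)|\le\frac{\|r\|_\infty}{\delta},\qquad |\partial_s f_k(s)|\le\frac{\|r'\|_\infty}{\delta}+\frac{\|r\|_\infty}{\delta^2},
\]
and both bounds are independent of $k$. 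The standard Riemann-sum estimate then gives
\[
\Bigl|\sum_j f_k(s_j)\tfrac{L}{N}-\int_0^L f_k\,\d s\Bigr|\le \frac{L^2}{N}\sup_{s}|\partial_s f_k|,
\]
which is $\mathcal{O}(1/N)$ uniformly in $k$. For large $|k|$ the bound is even better, since $f_k=\mathcal{O}(1/|k|)$, but the uniform Lipschitz bound already suffices.

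For the second limit, the same argument applies with $\kappa_j^*$ on $\Sigma_2$, $\chi_j^*=\frac{r^*(\kappa_j)L}{N\pi}$, and $\delta_-:=\operatorname{dist}(\mathbb{C}\setminus K_-,\Sigma_2)$. Writing $\zeta=\kappa^*$ gives $r^*(\zeta^*)$ in the integrand, and the orientation of $\Sigma_2$ runs from $F^*$ to $E^*$, as in Fig.~\ref{contour1}; conjugation reverses the orientation. Since $\zeta\mapsto r^*(\zeta^*)$ is analytic near $\Sigma_2$, the same derivative bound holds. I expect no real obstacle. The only delicate point is bookkeeping of the constant and orientation factors, so that the Riemann sum matches the stated measure $\frac{\d\zeta}{2\pi i}$; I will fix this by the parametrization convention above.
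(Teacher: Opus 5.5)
Your analytic argument is correct, and it is the same Riemann-sum argument the paper defers to (Girotti et al.). Since $\operatorname{dist}(\mathbb{C}\setminus K_\pm,\Sigma_{1,2})>0$, the integrand has a $k$-independent Lipschitz bound, so the error is $\mathcal{O}(1/N)$ uniformly in $k$. The gap is the constant. You defer it to a ``convention'', but it decides whether the identity holds at all. With $\chi_j=\frac{r(\kappa_j)L}{N\pi}$ as in Assumption~\ref{assumption1}, your own computation gives the limits
\[
\frac{iL}{F-E}\int_{E}^{F}\frac{2ir(\zeta)}{k-\zeta}\frac{\d\zeta}{2\pi i}
\qquad\text{and}\qquad
-\frac{i(F-E)}{L}\int_{F^*}^{E^*}\frac{2ir^*(\zeta^*)}{k-\zeta}\frac{\d\zeta}{2\pi i}.
\]
The minus sign in the second limit comes from the paper orienting $\Sigma_2$ from $F^*$ to $E^*$, which is opposite to the conjugate image $E^*\to F^*$ of $\Sigma_1$; conjugation itself reverses nothing. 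Both prefactors equal $1$ only if $F-E=iL$, i.e.\ for an upward vertical segment, which is the imaginary-axis setting of the cited reference. Assumption~\ref{assumption1} imposes $E_1<F_1$, so this never happens here. Moreover, $r\neq 0$ on $\Sigma_1$, so the Cauchy transforms on the right cannot vanish identically on $\mathbb{C}\setminus K_\pm$ (otherwise their jump would vanish). Hence the proposition is false as literally stated, and no argument proves it without changing a normalization.

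Your proposed repair does not handle this correctly.
\begin{itemize}
\item Reading $\d\zeta$ as $|\d\zeta|$ still leaves a factor $i$: the right side becomes $\frac{1}{\pi}\int_0^L\frac{r}{k-\zeta}\,\d s$, while the left side is $\frac{i}{\pi}\int_0^L\frac{r}{k-\zeta}\,\d s$. So it is not ``equivalent'' to absorbing the factor into $\chi_j$.
\item The reading that does match is $\frac{\d\zeta}{2\pi i}\mapsto\frac{|\d\zeta|}{2\pi}$. But then the right side is the Cauchy integral of $\frac{iL}{F-E}\,2ir$, not of $2ir$. The Sokhotski--Plemelj step leading to RH problem~\ref{RHP2} would then give the jump $\frac{iL}{F-E}\,2ir(k)e^{2it\theta(k)}$ on $\Sigma_1$, not the one stated there.
\end{itemize}
The clean fix is to change the weights. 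Take $\chi_j=\frac{r(\kappa_j)\,\Delta k}{i\pi}$ with the complex step $\Delta k=\frac{F-E}{N}$; this is permitted, since only $\chi_j\in\mathbb{C}\setminus\{0\}$ is required. Equivalently, replace $r$ by $\frac{iL}{F-E}\,r$, which is harmless because $r$ is an arbitrary analytic non-vanishing function. Then $i\chi_j=\frac{r(\kappa_j)}{\pi}\Delta k$ is exactly $\frac{2ir(\kappa_j)}{2\pi i}\,\d\zeta$ with $\d\zeta=\Delta k$ along $\Sigma_1$. After conjugation and the orientation sign, the same holds on $\Sigma_2$. With this change your Riemann-sum estimate proves both limits verbatim, as genuine contour integrals, uniformly in $k$.
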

\begin{proof}
The proof is similar as \cite{Girotti-1}.
\end{proof}
With the help of Proposition \ref{p1} and a small norm argument, the jump conditions of $Z(k)$ transform into the following formula
\begin{align}\label{Zjump-limit}
Z_{+} (k)= Z_{-}(k)
\begin{cases}
\begin{pmatrix}1&0\\ \displaystyle {e^{2 it\theta(k)}\int_{E}^{F} \frac{ 2ir(\zeta)}{k-\zeta }\frac{\d \zeta}{2\pi i}} & {1} \end{pmatrix},  \quad k \in  \Gamma_+,\\
\begin{pmatrix} {1}& -\displaystyle {e^{-2i t\theta(k)} \int_{F^*}^{E^*}\frac{2i\bar{r}(\zeta)}{k-\zeta }\frac{\d \zeta}{2\pi i} }\\ 0 & {1} \end{pmatrix}, \quad k \in \Gamma_-,
\end{cases}
\end{align}
where $\bar{r}(k)=r^*(k^*)$. Next, we define the following transformation to eliminate its jump on the contours $\Gamma_+$ and $\Gamma_-$
\begin{equation}
X(k) =Z(k)
\begin{cases}
 \begin{pmatrix} {1}&0 \\ \displaystyle{e^{ 2 it\theta(k)} \int_{E}^{F} \frac{2ir(\zeta)}{\zeta-k} \frac{\d \zeta}{2\pi i} } & {1}\end{pmatrix}
        ,&\text{$k$ inside the loop  $\Gamma_+$},\\
 \begin{pmatrix} {1}& \displaystyle{ e^{-2it\theta(k)} \int_{F^*}^{E^*} \frac{2i\bar{r}(\zeta)}{\zeta-k } \frac{\d \zeta}{2\pi i}}\\  0& {1} \end{pmatrix}
        ,&\text{$k$ inside the loop  $\Gamma_-$},\\
 I,
       &\text{elsewhere}.
\end{cases}
\end{equation}
 There are still jumps across $\Sigma_{1}\cup\Sigma_{2}$ because the integrals have jumps across those segments. By the Sokhotski-Plemelj formula, we obtain a RH problem for $X(k;x,t)$.
\begin{RHP}\label{RHP2} Find a matrix-valued function $X(k;x,t)$ with the following properties
\begin{enumerate}
\item  $X(k;x,t)$ is analytic for $k\in \C\backslash\le(\Sigma_{1}\cup\Sigma_{2} \ri)$.
 \item For $k \in \Sigma_{1}\cup\Sigma_{2}$, the boundary value $X_{+}(k;x,t)$ is taken from the left side of the contours and  the boundary value $X_{-}(k;x,t)$ is taken from the right and they satisfy the following jump relation
\begin{align}\label{gas1}
&X_{+}(k) = X_{-}(k) \begin{cases} \begin{pmatrix} {1}&0\\ \displaystyle {2ir(k) e^{ 2 it\theta(k)} } & {1} \end{pmatrix}, & k \in \Sigma_{1}, \\
\begin{pmatrix} {1}& \displaystyle {2i\bar{r}(k) e^{-2 it\theta(k)}}\\ 0 & {1} \end{pmatrix}, &  k \in \Sigma_{2},
\end{cases}
\end{align}
where the jump contours see Fig.\ref{contour1}.
\item
$
X(k;x,t) = I + \mathcal{O}\le(\frac{1}{k}\ri), \quad \text{as}\  k \to \infty.
$
 \end{enumerate}
\end{RHP}
\begin{figure}
\centering
\scalebox{.75}{
\begin{tikzpicture}[>=stealth]op
\path (0,0) coordinate (O);
\node [above] at (-2.5,-0.3) { ${\Sigma}_1$};
\node [above] at (-2.5,-3.5){${\Sigma}_2$};
\draw[fill] (-2,-2.5) circle [radius=0.05];
\node[below right] at (-2,-2.5) {$F^*$};
\draw[fill] (-5,-4) circle [radius=0.05];
\node[below left] at (-5,-4) {$E^*$};
\draw[fill] (-5,1) circle [radius=0.05];
\node[above left] at (-5,1) {$E$};
\draw[fill] (-2,-0.5) circle [radius=0.05];
\node[above right] at (-2,-0.5) {$F$};
\draw[thick,dashed] (-8,-1.5)--(3,-1.5);
\draw[->- = .7,thick] (-5,1)--(-2,-0.5) ;
\draw[->- = .7,thick] (-2,-2.5)--(-5,-4) ;
\end{tikzpicture}
}
\caption{The oriented contours $\Sigma_{1}\cup\Sigma_{2}$.}
\label{contour1}
\end{figure}

Jump matrices in RH problem \ref{RHP2}  satisfy the Schwarz symmetry, so by Zhou's lemma\cite{Zhou1989} the solution exists and is unique. As $N \to \infty$,As \(N\to\infty\), the quantity \(Z\), defined by \eqref{Zjump-discrete}, converges to the solution of  \eqref{Zjump-limit}. As a consequence,  the $N$-soliton potential $q(x,t)$ converges to the potential determined by the solution to the soliton gas RH problem \ref{RHP2}. We can recover the soliton gas $q(x,t)$ from
\begin{align}\label{u-X}
q(x,t)=2i\lim\limits_{k\to\infty}kX_{12}(k;x,t).
\end{align}
We will study the  asymptotic behaviors for large negative $x$ and large  $t$  of focusing NLS soliton gas in the subsequent sections.
\section{Asymptotics of the soliton gas $q(x,0)$ as $x\to -\infty$}
\label{sec:3}
In this section, we consider the asymptotic behavior of the focusing NLS soliton gas $q(x,0)$ for the large negative $x$. The asymptotic behavior of $q(x,t_0)$ for fixed $t_0$ can be similarly analyzed.

When $t=0$, we obtain the jump conditions for $X(k;x)$
\begin{align}
&X_{+}(k;x) = X_{-}(k;x) \begin{cases} \begin{pmatrix} {1}&0\\ \displaystyle {2ir(k) e^{2ikx} } & {1} \end{pmatrix}, & k \in \Sigma_1, \\
\begin{pmatrix} {1}& \displaystyle {2i\bar{r}(k) e^{ -2 i k x} }\\ 0 & {1} \end{pmatrix}, &  k \in \Sigma_2.
\end{cases}
\end{align}
Introducing a $g$-function which has the form
\begin{gather}\label{gk}
g(k)=\int_{E} ^{k}\frac{\zeta^2-e_1\zeta+c_1}{R(\zeta)}d\zeta\ ,
\end{gather}
where $e_1=\frac{E+F+E^*+F^*}{2}$, $R(k)=\sqrt{(k-E)(k-E^*)(k-F)(k-F^*)}$  and the constant $c_1$ is chosen so that
\begin{align}\label{intF}
\int_{F} ^{F^*}\frac{\zeta^2-e_1\zeta+c_1}{R(\zeta)}d\zeta=0,
\end{align}
this is
\begin{align}\label{c1}
c_1=-\dfrac{\int_{F} ^{F^*}\frac{\zeta^2-e_1\zeta}{R(\zeta)}d\zeta}{\int_{F} ^{F^*}\frac{1}{R(\zeta)}d\zeta}\in \R.
\end{align}
By construction, the function $g(k)$ has the asymptotics as $k\to \infty$,
\begin{align}
g(k)=k+g_{\infty}+\mathcal{O}\le(\frac{1}{k}\ri), \ \  k \to  \infty \, \label{gconstraint3},
\end{align}
where
\begin{gather}\label{ginfty}
g_{\infty}=(\int\limits_{E} ^{\infty}+\int\limits_{E^*} ^{\infty})(\frac{\zeta^2-e_1\zeta+c_1}{2R(\zeta)}-\frac{1}{2})d\zeta-E_1\in \R.
\end{gather}
Since $g'(k)=\mathcal{O}((k-p)^{-\frac{1}{2}}),\ p=E,F,F^*,E^*$, and the points inside $\Im g(k)=0$ are $E,F,F^*,E^*$, then the points $E$ and $F$ are connected by the zero-level  curve ${\Sigma}_{1}$, and by symmetry the points $F^*$ and $E^*$ are connected by the zero-level curve ${\Sigma}_{2}$, where $\Im g(k)=0$. Thanks to \eqref{gconstraint3}, we obtain the signature table of function $\Im g(k)$, see Fig.\ref{signtable1}.
\begin{figure}
\centering
\scalebox{.75}{
\begin{tikzpicture}[>=stealth]op
\path (0,0) coordinate (O);
\node [above] at (-2.5,-0.3) { ${\Sigma}_1$};
\node [above] at (-2.5,-3.5){${\Sigma}_2$};
\draw[fill] (-2,-2.5) circle [radius=0.05];
\node[below right] at (-2,-2.5) {$F^*$};
\draw[fill] (-5,-4) circle [radius=0.05];
\node[below left] at (-5,-4) {$E^*$};
\draw[fill] (-5,1) circle [radius=0.05];
\node[above left] at (-5,1) {$E$};
\draw[fill] (-2,-0.5) circle [radius=0.05];
\node[above right] at (-2,-0.5) {$F$};
\draw[thick] (-8,-1.5)--(3,-1.5);
\draw[->- = .7,thick] (-5,1)--(-2,-0.5) ;
\draw[->- = .7,thick] (-2,-2.5)--(-5,-4) ;
\node[above] at (-2.5,0.8) {$+$};
\node[below] at (-2.5,-0.8) {$+$};
\node[below] at (-2.5,-3.8) {$-$};
\node[above] at (-2.5,-2.2) {$-$};
\end{tikzpicture}
}
\caption{The signature table of $\Im g(k)$.}
\label{signtable1}
\end{figure}
Then the function $g(k)$ is analytic in $\C\backslash ({\Sigma}_1\cup{\Sigma}_2\cup\Sigma_{F})$, with $\Sigma_{F}$ is a curve connecting the point $F$ and $F^*$, see Fig.~\ref{openinglenses1}, and satisfies
\begin{align}
& g_+(k) + g_-(k) =0 ,& k \in {\Sigma}_1\cup{\Sigma}_2 \label{gconstraint1}, \\
& g_+ (k)- g_-(k)=\Omega, & k \in \Sigma_{F}  \label{gconstraint2},
\end{align}
where
\begin{gather}\label{Omega}
\Omega=(\int\limits_{E}^{F}+\int\limits_{E^*}^{F^*})\frac{\zeta^2-e_1\zeta+c_1}{R_{+}(\zeta)}d\zeta\in \R.
\end{gather}
We also need the  following ansatz on the function $f(k)$:
\begin{align}
& f_+(k) f_-(k) =\frac{1}{2r(k)}, & k \in {\Sigma}_1 \label{fconstraint1}, \\
&f_+(k) f_-(k) = {2\bar{r}(k)}, & k \in {\Sigma}_2,\label{fconstraint2}\\
&\frac{f_+(k)}{ f_-(k) }=e^{i\Delta}, & k \in \Sigma_{F} \label{fconstraint3},\\
&f(k) = f_{\infty}+\mathcal{O}\le(\frac{1}{k}\ri), & k \to  \infty \label{fconstraint3}.
\end{align}
Then, the function $f(k)$ is given by
\begin{equation}
\label{f}
f(k)=\exp\le\{\frac{R(k)}{2\pi i}\left[\int_{ {\Sigma}_1}  \frac{\log(\frac{1}{2r(\zeta)}) }{R_+(\zeta)(\zeta-k)} \d \zeta+
\int_{{\Sigma}_2}  \frac{\log (2\bar{r}(\zeta)) }{R_+(\zeta)(\zeta-k)} \d \zeta+\int^{F^*}_{F}  \frac{i\Delta}{R(\zeta)(\zeta-k)} \d \zeta \ri]
\right\} \, .
\end{equation}
By \eqref{fconstraint3}, we obtain
\begin{align}
&\Delta=i\left[\int_{{\Sigma}_1} \frac{\log(\frac{1}{2r(\zeta)})}{R_+(\zeta)} \d \zeta+ \int_{{\Sigma}_2} \frac{\log (2\bar{r}(\zeta))}{R_+(\zeta)} \d \zeta \right]\left[\int^{F^*}_{F}\frac{\d\zeta}{R(\zeta)}\right]^{-1}\label{Delta},\\
&f_{\infty}=\exp\le\{-\frac{1}{2\pi i}\left[\int_{{\Sigma}_1}\frac{\log(\frac{1}{2r(\zeta)})(\zeta-e_1) }{R_+(\zeta)} \d \zeta+
\int_{{\Sigma}_2}  \frac{\log ({2\bar{r}(\zeta)})(\zeta-e_1)}{R_+(\zeta)} \d \zeta+\int^{F^*}_{F}  \frac{i\Delta(\zeta-e_1)}{R(\zeta)} \d \zeta \ri]
\right\}.\label{finfty}
\end{align}
We introduce a new matrix-valued function
\begin{equation}\label{EqD}
\begin{array}{l}
T(k)=f_{\infty}^{-\sigma_3}e^{-ixg_{\infty}\sigma_3}X(k)e^{ix(g(k)-k)\sigma_3}f(k)^{\sigma_3}.
\end{array}
\end{equation}
Then $T(k;x)$ satisfies the following RH problem.
\begin{RHP} Find a matrix-valued function $T(k;x)$ with the following properties
\begin{enumerate}
\item  $T(k;x)$ is analytic for $k\in \C\backslash ({\Sigma}_1\cup{\Sigma}_2\cup\Sigma_{F})$.
 \item  For $k\in {\Sigma}_1\cup{\Sigma}_2\cup\Sigma_{F}$, the boundary value $T_{+}(k;x)$ is taken from the left side of the contours and  the boundary value $T_{-}(k;x)$ is taken from the right, then we have the following jump relation
\begin{align}
\label{RH_T}
T_+(k)=T_-(k)\begin{cases}
\displaystyle \begin{pmatrix} e^{i x(g_+(k) - g_-(k))}\frac{f_+(k)}{f_-(k)} & 0\\ i & e^{-ix(g_+(k) - g_-(k))}\frac{f_-(k)}{f_+(k)} \end{pmatrix}, &\quad k \in  {\Sigma}_1,\\
\displaystyle \begin{pmatrix} e^{i x(g_+(k) - g_-(k))}\frac{f_+(k)}{f_-(k)}&i  \\ 0 & e^{-ix(g_+(k) - g_-(k))}\frac{f_-(k)}{f_+(k)} \end{pmatrix}, & \quad k \in {\Sigma}_2, \\
\displaystyle \begin{pmatrix} e^{i(x\Omega+\Delta) } & 0 \\0& e^{-i(x\Omega+\Delta)} \end{pmatrix},&\quad k \in  \Sigma_{F}.\\
  \end{cases}
  \end{align}
\item
$
T(k) = I + \mathcal{O}\le(\frac{1}{k}\ri), \quad \text{as} \ k \to \infty.
$
\end{enumerate}
\end{RHP}

\subsection{Opening lenses and the outer matrix parametrix $S^{\infty}(k)$} \label{TtoS}
The jump matrix for $T(k)$  on ${\Sigma}_{1}$  can be decomposed into
\begin{align*}
& \begin{pmatrix} e^{i x(g_+(k) - g_-(k))}\frac{f_+(k)}{f_-(k)}&0 \\ i & e^{-i x(g_+(k) - g_-(k))}\frac{f_-(k)}{f_+(k)} \end{pmatrix}\\
&=
\begin{pmatrix}1&-i\frac{e^{-2ixg_-(k)}}{2{r}(k)f_-^2 (k)}\\0 &1 \end{pmatrix}
\begin{pmatrix}0&i\\i&0\end{pmatrix}
\begin{pmatrix}1&-i\frac{e^{-2ixg_+(k)}}{2{r}(k)f_+^2(k)}\\0&1 \end{pmatrix}.
\end{align*}
and on ${\Sigma}_2$ as

\begin{align*}
&\begin{pmatrix} e^{i x\le(g_+(k) - g_-(k)\ri)}\dfrac{f_+(k)}{f_-(k)} & i \\ 0  & e^{-i x\le(g_+(k) - g_-(k)\ri)}\dfrac{f_-(k)}{f_+(k)} \end{pmatrix}\\
&=
\begin{pmatrix}1&0\\-i\frac{e^{2ixg_-(k)}f_-^2 (k)}{2\bar{r}(k)}&1 \end{pmatrix}
\begin{pmatrix}0&i \\i &0\end{pmatrix}
\begin{pmatrix}1&0\\ -i\frac{e^{2ixg_+(k)}f_+^2(k)}{2\bar{r}(k)}&1 \end{pmatrix},
\end{align*}
Define the function $\hat{r}(k)$  is the analytic  continuation of the functions $r(k)$ and $\bar{r}(k)$  off the contours ${\Sigma}_1\cup {\Sigma}_2$, which satisfies the following condition:
\begin{equation}\label{EqF}
\begin{array}{l}
\hat{r}(k)= r(k),\ k\in{\Sigma}_1,\ \  \hat{r}(k)= \bar{r}(k),\ k\in{\Sigma}_2.
\end{array}
\end{equation}
Using the decomposition above, we define a new matrix-valued function  $S(k)$ to proceed with opening lenses,
\begin{gather}
S(k) = \begin{cases}
\displaystyle T(k)\begin{pmatrix}1&i\frac{e^{-2ixg(k)}}{2\hat{r}(k)f^2 (k)}\\0 &1 \end{pmatrix},
 &\quad \text{in the upper lens, above ${\Sigma}_1$}, \\
\displaystyle T(k)\begin{pmatrix}1&-i\frac{e^{-2ixg(k)}}{2\hat{r}(k)f^2 (k)}\\0 &1 \end{pmatrix} ,
 &\quad \text{in the lower lens, below ${\Sigma}_1$}, \\
\displaystyle T(k)\begin{pmatrix}1&0\\ i\frac{e^{2i xg(k)}f^2(k)}{2\hat{r}(k)}&1 \end{pmatrix},
&\quad \text{in the lower lens, below ${\Sigma}_2$ },\\
\displaystyle T(k)\begin{pmatrix}1&0\\-i\frac{e^{2i x g(k)}f^2 (k)}{2\hat{r}(k)}&1 \end{pmatrix},
&\quad \text{in the upper lens, above ${\Sigma}_2$ },\\
\displaystyle T(k), &\quad \text{outside the lenses},
\end{cases} \label{RHPS}
\end{gather}
where the lenses are shown in Fig. \ref{openinglenses1}. It follows that the matrix-valued function $S(k)$ satisfies the following RH problem.
\begin{RHP}\label{RHP4} Find a matrix-valued function $S(k;y)$ with the following properties
\begin{enumerate}
\item $S(k;y)$ is analytic for $k\in\C\backslash \le({\Sigma}_1\cup{\Sigma}_2\cup\Sigma_{F}\cup {\mathcal C}_1\cup{\mathcal C}_2 \ri)$, the contours ${\mathcal C}_1\cup{\mathcal C}_2$ see Fig.\ref{openinglenses1}.
\item For $k\in {\Sigma}_1\cup{\Sigma}_2\cup\Sigma_{F}\cup {\mathcal C}_1\cup{\mathcal C}_2$, the boundary values $S_{\pm}(k;y)$ satisfy the jump relation
\begin{equation}
\label{VS}
\begin{split}
&S_+(k)=S_-(k)V_S(k),\\
&V_S(k)=\begin{cases}
\begin{pmatrix}1&-i\frac{e^{-2ixg(k)}}{2\hat{r}(k)f^2 (k)}\\0 &1 \end{pmatrix},\ \ k\in \mathcal{C}_1,\\
\begin{pmatrix}1&0\\-i\frac{e^{2i x g(k)}f^2 (k)}{2\hat{r}(k)}&1 \end{pmatrix},\ \ k\in \mathcal{C}_2,\\
\begin{pmatrix}0&i\\i &0 \end{pmatrix},\ \ k\in {\Sigma}_1\cup{\Sigma}_2,\\
\begin{pmatrix} e^{i(x\Omega+\Delta) } & 0 \\0& e^{-i(x\Omega+\Delta)} \end{pmatrix},\ \  k \in  \Sigma_{F}.
\end{cases}
\end{split}
\end{equation}

\item
$
S(k)=I + \mathcal{O}\le(\frac{1}{k}\ri), \quad k \to \infty.
$
\end{enumerate}
\end{RHP}
\begin{figure}
\centering
\scalebox{.75}{
\begin{tikzpicture}[>=stealth]op
\path (0,0) coordinate (O);
\node [above] at (-2.2,0.3) { ${\mathcal C}_1$};
\node [above] at (-3.2,-1.3) { ${\mathcal C}_1$};
\node [below] at (-3.2,-1.7) { ${\mathcal C}_2$};
\node [below] at (-2.2,-3.8) { $ {\mathcal C}_2$};
\node [below] at (-3.5,0.2) { ${\Sigma}_1$};
\node [above] at (-3.5,-4){${\Sigma}_2$};
\draw[fill] (-2,-2.5) circle [radius=0.05];
\node[above right] at (-2,-2.5) {$F^*$};
\draw[fill] (-5,-4) circle [radius=0.05];
\node[above left] at (-5,-4) {$E^*$};
\draw[fill] (-5,1) circle [radius=0.05];
\node[above left] at (-5,1) {$E$};
\draw[fill] (-2,-0.5) circle [radius=0.05];
\node[above right] at (-2,-0.5) {$F$};
\node[above right] at (-2,-1.5) {$\Sigma_{F}$};
\draw[->-=.7,thick,dashed] (-8,-1.5)--(3,-1.5);
\draw[->- = .7,thick] (-5,1)--(-2,-0.5) ;
\draw[->- = .7,thick] (-2,-2.5)--(-5,-4) ;
\draw[->-=.7,thick] (-2,-0.5)--(-2,-2.5);
\draw[->-=.7,thick] (-5,1)--(-3,1.25);
\draw[->-=.7,thick] (-3,1.25)--(-2,-0.5);
\draw[->-=.7,thick] (-5,1)--(-4,-0.75);
\draw[->-=.7,thick] (-4,-0.75)--(-2,-0.5);

\draw[->-=.7,thick] (-2,-2.5)--(-4,-2.25);
\draw[->-=.7,thick] (-4,-2.25)--(-5,-4);
\draw[->-=.7,thick] (-2,-2.5)--(-3,-4.25);
\draw[->-=.7,thick] (-3,-4.25)--(-5,-4);
\end{tikzpicture}
}
\caption{  Opening lenses.}
\label{openinglenses1}
\end{figure}
From the signature table of the function $\Im(g(k))$ as shown in Fig.\ref{signtable1}, we have
\begin{align}
\label{in1}
&\operatorname{Im} (g(k))>0\, ,\quad k\in {\mathcal C}_1\backslash\{E,F\},\\
\label{in2}
&\operatorname{Im} (g(k)) <0\, ,\quad  k\in {\mathcal C}_2\backslash\{F^*,E^*\}\, ,
\end{align}
where $\mathcal{C}_1$ and $\mathcal{C}_2$ are the contours defining the lenses as shown in Fig. \ref{openinglenses1}.
By above inequalities, the off-diagonal entries of the jump matrices for $S(k)$ exponentially vanish as $x\to -\infty$ along the upper and lower lenses. Then we get the model RH problem for $S^{\infty}(k)$.
\begin{RHP}\label{RHP5} Find a matrix-valued function $S^{\infty}(k;x)$ with the following properties
\begin{enumerate}
\item  $S^{\infty}(k;x)$ is analytic for $k\in \C\backslash ({\Sigma}_1\cup{\Sigma}_2\cup\Sigma_{F})$.
\item  For $k \in {\Sigma}_1\cup{\Sigma}_2\cup\Sigma_{F}$, the boundary values $S^{\infty}_{\pm}(k;x)$ satisfy the following jump relation
\begin{gather}
\label{Sinfinity1}
S^{\infty}_+(k) = S^{\infty}_-(k)
\begin{cases}
\begin{pmatrix} e^{i(x\Omega+\Delta)} &0 \\ 0 & e^{-i(x\Omega+\Delta)}\end{pmatrix},  & k \in \Sigma_{F},  \\
  \begin{pmatrix}0 & i\\ i & 0 \end{pmatrix}, &k \in {\Sigma}_1\cup {\Sigma}_{2}.  \\
\end{cases}
\end{gather}
\item
\begin{align}\label{Sinfinity2}
S^{\infty}(k)=I+\mathcal{O}\le(\frac{1}{k}\ri),\quad k\to\infty.
\end{align}
\end{enumerate}
\end{RHP}
To solve the RH problem \ref{RHP5}, we introduce a two-sheeted Riemann surface $\mathfrak{X}$ of genus $1$ associated to  the multivalued function $R(k)$,  namely
\[
\mathfrak{X}=\le\{ (k,\eta)\in\mathbb{C}^2\;|\; \eta^2=R^{2}(k)=(k-E)(k-E^*)(k-F)(k-F^*)\ri\} \, .
\]
The first sheet of $\mathfrak{X}$ is identified with the sheet where $R(k)=k^2(1+\mathcal{O}(\frac{1}{k}))$ as $k\to\infty^{+}$. The first and second sheet of the surface are two complex planes merged along the cuts ${\Sigma}_{1}$ and ${\Sigma}_{2}$. We introduce a canonical homology basis with the $\mathscr{B}$ cycle encircling ${\Sigma}_{1}$ clockwise on the first sheet and $\mathscr{A}$ cycle starts on the first sheet from the left side of the cut ${\Sigma}_{1}$, goes to the left side of cut ${\Sigma}_{2}$, proceeds to the lower sheet,  and coming back to ${\Sigma}_{1}$ on the second sheet. We introduce the holomorphic differential
 \begin{eqnarray}
\hat{\omega}=\left(2\int_{F}^{F^*}\frac{\mathrm{d}k}{R(k)}\right)^{-1}\frac{\mathrm{d}k}{R(k)},
\end{eqnarray}
so that
 \[
\oint_{\mathscr{A}}\hat{\omega}=1,
\]
and define the period
\[
\tau=\oint_{\mathscr{B}}\hat{\omega}=\left(\int_{F}^{F^*}\frac{\mathrm{d}k}{R(k)}\right)^{-1}\int_{E}^{F}\frac{\mathrm{d}k}{R(k)}.
\]
Then we introduce the Jacobi theta function
\begin{gather}
\vartheta_3 (z;\tau)= \sum_{n\in \mathbb{Z}} e^{2\pi i \, nz +  \pi  n^2 i \tau} \ , \quad z \in \mathbb{C} \ ,
\end{gather}
which is an even function of $z$ and satisfies the periodicity conditions
\begin{equation}
\label{periods}
\vartheta_3 (z+h+\hat{\lambda}\tau;\tau)=e^{-\pi  i \hat{\lambda}^2 \tau-2\pi i \hat{\lambda} z}\vartheta_3 (z;\tau)\, ,\quad h,\hat{\lambda}\in\mathbb{Z}.
\end{equation}
Using $\hat{\omega}$, we define the integral
\begin{gather}\label{Ak}
A(k)=\int_{E}^{k}\hat{\omega},\quad k\in\mathbb{C}\backslash ({\Sigma}_1\cup{\Sigma}_2\cup \Sigma_{F}).
\end{gather}
We observe that
\begin{equation}\label{Ajump}
\begin{array}{l}
A_{+}(k)+A_{-}(k)=0,\quad k\in{\Sigma}_{1},\\
A_{+}(k)-A_{-}(k)=\tau,\quad k\in \Sigma_{F},\\
A_{+}(k)+A_{-}(k)=1,\quad k\in{\Sigma}_{2}.
\end{array}
\end{equation}

Next we define the function
\begin{align}\label{gamma}
\gamma(k)=\le(\dfrac{(k-E^*)(k-F^*)}{(k-E)(k-F)}\ri)^{\frac14 },
\end{align}
with the properties
\begin{align}\label{gammajump}
\gamma_{+}(k)=i\gamma_{-}(k),\ k\in {\Sigma}_{1}\cup{\Sigma}_{2}\ \ \gamma_{+}(k)=-\gamma_{-}(k),\ k\in \Sigma_{F},\ \gamma(k)=1+\frac{i(E_2+F_2)}{2k}+\mathcal{O}(\frac{1}{k^2}),
\end{align}
where $E_2=\Im E,\ F_2=\Im F$.
The zeros of functions $\gamma(k)\pm\frac{1}{\gamma(k)}$ satisfy the equation $\gamma^4(k)=1$, which gives the zero
\begin{align}\label{E0}
E_0=\dfrac{EF-E^*F^*}{E+F-E^*-F^*}.
\end{align}
$E_0$ is real such that $\gamma^2(E_0)=-1$ and $\gamma(E_0)+\frac{1}{\gamma(E_0)}=0$.
Finally, we can construct the  solution of the RH problem \ref{RHP5}.
\begin{thm}
The matrix-valued solution $S^{\infty}(k)$ of RH problem \ref{RHP5} is considered on the first sheet of the Riemann surface $\mathfrak{X}$ with cut along the contours ${\Sigma}_1\cup{\Sigma}_2\cup [F,F^*]$ and the entries of $S^{\infty}(k)$ are given by
\begin{equation}
\label{Sinfty_sol}
\begin{split}
S^{\infty}_{11}(k)=\frac{1}{2}\le(\gamma(k)+\frac{1}{\gamma(k)}\ri)\dfrac{\vartheta_3\le(A(k)-A(E_0)-\frac{\tau}{2}-\frac{x\Omega+\Delta}{2\pi};\tau\ri)}{\vartheta_3\le( A(k)-A(E_0)-\frac{1}{2}-\frac{\tau}{2};\tau\ri)} \dfrac{\vartheta_3\le( A(\infty)-A(E_0)-\frac{1}{2}-\frac{\tau}{2};\tau\ri)}{\vartheta_3\le(A(\infty)-A(E_0)-\frac{\tau}{2}-\frac{x\Omega+\Delta}{2\pi};\tau\ri)},\\
S^{\infty}_{12}(k)=\frac{1}{2}\le(\gamma(k)-\frac{1}{\gamma(k)}\ri)\dfrac{\vartheta_3\le(A(k)+A(E_0)+\frac{\tau}{2}+\frac{x\Omega+\Delta}{2\pi};\tau\ri)}{\vartheta_3\le( A(k)+A(E_0)+\frac{1}{2}+\frac{\tau}{2};\tau\ri)} \dfrac{\vartheta_3\le( A(\infty)-A(E_0)-\frac{1}{2}-\frac{\tau}{2};\tau\ri)}{\vartheta_3\le(A(\infty)-A(E_0)-\frac{\tau}{2}-\frac{x\Omega+\Delta}{2\pi};\tau\ri)},\\
S^{\infty}_{21}(k)=\frac{1}{2}\le(\gamma(k)-\frac{1}{\gamma(k)}\ri)\dfrac{\vartheta_3\le(A(k)+A(E_0)+\frac{\tau}{2}-\frac{x\Omega+\Delta}{2\pi};\tau\ri)}{\vartheta_3\le( A(k)+A(E_0)+\frac{1}{2}+\frac{\tau}{2};\tau\ri)} \dfrac{\vartheta_3\le( A(\infty)-A(E_0)-\frac{1}{2}-\frac{\tau}{2};\tau\ri)}{\vartheta_3\le(A(\infty)-A(E_0)-\frac{\tau}{2}+\frac{x\Omega+\Delta}{2\pi};\tau\ri)},\\
S^{\infty}_{22}(k)=\frac{1}{2}\le(\gamma(k)+\frac{1}{\gamma(k)}\ri)\dfrac{\vartheta_3\le(A(k)-A(E_0)-\frac{\tau}{2}+\frac{x\Omega+\Delta}{2\pi};\tau\ri)}{\vartheta_3\le( A(k)-A(E_0)-\frac{1}{2}-\frac{\tau}{2};\tau\ri)}\dfrac{\vartheta_3\le( A(\infty)-A(E_0)-\frac{1}{2}-\frac{\tau}{2};\tau\ri)}{\vartheta_3\le(A(\infty)-A(E_0)-\frac{\tau}{2}+\frac{x\Omega+\Delta}{2\pi};\tau\ri)}.
\end{split}
\end{equation}
\end{thm}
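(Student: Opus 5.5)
The plan is to verify directly that the proposed entries satisfy the three conditions of RH problem \ref{RHP5}: analyticity off the contours, the jump relations \eqref{Sinfinity1}, and the normalization \eqref{Sinfinity2}. Uniqueness then follows from the standard Liouville argument, since the jump matrices have unit determinant and the singularities at $E,F,F^*,E^*$ are at most of order $|k-p|^{-1/4}$. It is convenient to write $S^\infty$ as the product of two factors. The first is the genus-zero factor
\[
N(k)=\begin{pmatrix}\frac12(\gamma+\gamma^{-1})&\frac12(\gamma-\gamma^{-1})\\ \frac12(\gamma-\gamma^{-1})&\frac12(\gamma+\gamma^{-1})\end{pmatrix}.
\]
By \eqref{gammajump}, $\gamma_+=i\gamma_-$ on $\Sigma_1\cup\Sigma_2$, so $N_+=N_-\begin{pmatrix}0&i\\ i&0\end{pmatrix}$ there. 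The ratio $\gamma_+/\gamma_-=-1$ on $\Sigma_F$ produces a sign change, $N_+=-N_-$ on $\Sigma_F$. The second factor is the collection of theta quotients, one for each entry.

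For the theta quotients, the jump relations \eqref{Ajump} are used together with the evenness and quasi-periodicity \eqref{periods} of $\vartheta_3$.
\begin{itemize}
\item On $\Sigma_1$, $A_+=-A_-$. Evenness of $\vartheta_3$ then swaps the arguments $A-A(E_0)-\frac\tau2\mp\frac{x\Omega+\Delta}{2\pi}$ and $A+A(E_0)+\frac\tau2\pm\frac{x\Omega+\Delta}{2\pi}$. This is exactly what the off-diagonal jump requires: the $(1,1)$ entry on the $+$ side must equal $i$ times the $(1,2)$ entry on the $-$ side, and so on.
\item On $\Sigma_2$, $A_+=1-A_-$. The same conclusion holds, using integer periodicity as well.
\item On $\Sigma_F$, $A_+=A_-+\tau$. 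By \eqref{periods} each theta quotient acquires a factor of the form $e^{\mp i(x\Omega+\Delta)}$ times a sign. Combined with the sign $-1$ from $N$, this must reproduce the diagonal jump $\mathrm{diag}(e^{i(x\Omega+\Delta)},e^{-i(x\Omega+\Delta)})$.
\end{itemize}
The shifts $\frac12+\frac\tau2$ in the denominators are chosen so that the exponential factors from numerator and denominator combine to exactly this phase with the correct sign. That bookkeeping, namely checking the exponent $-\pi i\tau-2\pi i z$ difference between numerator and denominator, is a routine calculation.

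Next, I would control the poles. The denominators $\vartheta_3(A(k)\mp A(E_0)-\frac12-\frac\tau2)$, after the appropriate sign change, vanish precisely where $A(k)=\pm A(E_0)$ modulo the lattice. This uses the fact that the theta zero is at the half-period $\frac12+\frac\tau2$. On the first sheet, that condition means $k=E_0$, which lies on $\mathbb{R}$ and is therefore in $\Sigma_F$ or off the cuts as placed. I would check that $A(k)=-A(E_0)$ corresponds to the point $E_0$ on the second sheet, so it does not occur on the first sheet. The zeros of $\gamma\pm\gamma^{-1}$ at $E_0$, coming from $\gamma^4(E_0)=1$ by \eqref{E0}, then cancel these theta zeros in the appropriate entries, leaving each entry analytic. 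This is the step I expect to be the main obstacle. One must confirm which of $\gamma+\gamma^{-1}$ and $\gamma-\gamma^{-1}$ vanishes at $E_0$ on the chosen branch, so that the $(1,1)$/$(2,2)$ denominators pair with $\gamma+\gamma^{-1}$ and the off-diagonal ones pair with $\gamma-\gamma^{-1}$. This requires the Riemann vanishing theorem in genus one and care with the branch of $\gamma$ and the choice of $\Sigma_F$.

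Finally, normalization. As $k\to\infty$, $\gamma\to1$, so $\gamma-\gamma^{-1}\to0$ and the off-diagonal entries tend to zero. The constant factors involving $A(\infty)$ are designed so that the diagonal entries tend to $1$. The theta functions in these factors are nonvanishing for real $x\Omega+\Delta$, by the reality statements \eqref{Omega} and the analogous reality of $\Delta$. With all conditions verified, the formula \eqref{Sinfty_sol} gives the solution.
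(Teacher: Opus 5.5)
Your proposal is correct and is essentially the paper's argument. The paper likewise checks the jumps on $\Sigma_1\cup\Sigma_2$ and $\Sigma_F$ from \eqref{gammajump}, \eqref{Ajump} and the evenness and quasi-periodicity of $\vartheta_3$, checks $S^\infty\to I$, and (in the remark after the proof) removes the poles of the diagonal theta quotients at $E_0$ with the simple zero of $\gamma+\gamma^{-1}$, using $\gamma^2(E_0)=-1$; this settles your branch worry, because $\gamma_+=-\gamma_-$ on $\Sigma_F$ means $\gamma^2$ has no jump there.

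One small repair concerns the normalizing constants: reality of $x\Omega+\Delta$ alone does not keep $\vartheta_3\bigl(A(\infty)-A(E_0)-\frac{\tau}{2}\mp\frac{x\Omega+\Delta}{2\pi};\tau\bigr)$ away from zero. You also need that $\tau$ and $A(\infty)-A(E_0)=-\int_{-\infty}^{E_0}\hat{\omega}$ are purely imaginary with $0<|A(\infty)-A(E_0)|<|\tau|$; this integration path avoids $\Sigma_F$ because $E_1<E_0<F_1$, and the claim follows from $R>0$ on $\mathbb{R}$ together with $\int_F^{F^*}\d k/R\in i\mathbb{R}$.
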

\begin{proof}
Using the properties of function $\gamma(k)$ in \eqref{gammajump} and the Abel integral $A(k)$ in \eqref{Ajump}, we have that $S^{\infty}_{11+}(k)=iS^{\infty}_{12-}(k)$ for $k\in {\Sigma}_1\cup{\Sigma}_2$ and $S^{\infty}_{11+}(k)=e^{i(x\Omega+\Delta)}S^{\infty}_{11-}(k)$ for $k\in \Sigma_{F}$ and so on. As $k\to \infty$, $S^{\infty}(k)\to I$, namely the condition \eqref{Sinfinity2} is satisfied.
\end{proof}
Note that  $S^{\infty}_{12}(k)$ and $S^{\infty}_{21}(k)$ are analytic and bounded on the first sheet. The poles of $S^{\infty}_{11}(k)$ and $S^{\infty}_{22}(k)$ on the first sheet are eliminated by the factor $\gamma(k)+\frac{1}{\gamma(k)}$. Since $\det S^{\infty}(k)$ does not have jumps on ${\Sigma}_{1}\cup{\Sigma}_{2}\cup\Sigma_{F}$ and $S^{\infty}(k)\to I$ as $k\to\infty$, we have that $\det S^{\infty}(k)\equiv 1$.

The matrix-valued solution $S^{\infty}(k)$ provides the asymptotic behaviour of the RH problem $S(k)$ for all $k$  away from the endpoints $E,F,E^*,F^*$. The local representation of $g(k)$ at the endpoints $E,F,E^*$ and $F^*$ exhibits a square root type behavior:
\begin{align}
g(k)=\mathcal{O}((k-\alpha)^{\frac12}),\ k\to \alpha, \ \alpha=E,E^*,\ \ g(k)=\frac{\Omega}{2}+\mathcal{O}((k-\beta)^{\frac12}),\ k\to \beta, \ \beta=F,F^*,
\end{align}
this means that the associated local RH problems near the endpoints $E,F,E^*$ and $F^*$ can be solved in terms of the modified Bessel functions. Next we show the construction of the local parametrix $P^{E^*}(k)$ around $E^*$. We define the local conformal map:
\begin{align}\label{zetak}
\zeta(k)=\frac{1}{4}[ixg(k)]^2,
\end{align}
and we note that $\zeta\in \R_{-}$ when $k\in {\Sigma}_{2}$.
We fix a small disc $B_{\rho}^{E^*}=\{ k\in\C\big| |k-E^*|<\rho\}$ center at $E^*$ of sufficiently small radius $\rho$, and we choose the contours $\mathcal{C}_2$ and ${\Sigma}_{2}$ within $B_{\rho}^{E^*}$ so that their images in $\zeta(B_{\rho}^{E^*})$ lie respectively on the straight rays $\arg (\zeta)=\pm \frac{2}{3}\pi$ and $\arg (\zeta)=\pi$. We make the transformation by defining
\begin{align}
P^{(1)}(k(\zeta))=S(k(\zeta))\le(\frac{\sqrt{2\hat{r}(k(\zeta))}e^{\frac{i}{4}\pi}}{f(k(\zeta))}\ri)^{\sigma_3}e^{-2\zeta^{\frac12}\sigma_3}
,\ \zeta\in \C.
\end{align}
Consequently, the matrix-valued function $P^{(1)}(\zeta)$ satisfies the local simple jump conditions.
\begin{align}
P^{(1)}_{+}(\zeta)=P^{(1)}_{-}(\zeta)\begin{cases}
\begin{pmatrix}
1&0\\1&1
\end{pmatrix},\ \ \zeta\in (\arg\zeta=\pm \frac{2}{3}\pi) \cap \zeta(B_{\rho}^{E^*}),\\
\begin{pmatrix}
0&1\\-1&0
\end{pmatrix},\ \ \zeta\in (-\infty,0]\cap \zeta(B_{\rho}^{E^*}).
\end{cases}
\end{align}
Then we introduce the model parametrix $\Psi_{Bes}(\zeta)$ as in \cite{Girotti-1}, which satisfies the following jump relation
\begin{align}
\Psi_{Bes+}(\zeta)=\Psi_{Bes-}(\zeta)\begin{cases}
\begin{pmatrix}
1&0\\1&1
\end{pmatrix},\ \  \arg \zeta=\pm \frac{2\pi}{3} ,\\
\begin{pmatrix}
0&1\\-1&0
\end{pmatrix},\ \ \arg \zeta=\pi,
\end{cases}
\end{align}
with asymptotics at infinity
\begin{align}
\Psi_{Bes}(\zeta)=(2\pi \zeta^{\frac12})^{-\frac12 \sigma_3}\frac{1}{\sqrt{2}}\begin{pmatrix}1&i\\i&1 \end{pmatrix}\le(  I +\mathcal{O}(\frac{1}{\zeta^{\frac12}})\ri)e^{2\zeta^{\frac12}\sigma_3}.
\end{align}
Then we find the local parametrix around the point $E^*$ is
\begin{gather}
P^{E^*}(k)=D(k)\Psi_\Bes(\zeta(k)) e^{-2\zeta(k)^{\frac{1}{2}}\sigma_3} \left(
\frac{\sqrt{2 \hat{r}(k)} e^{ i \pi  /4}}{ f(k)}
\right)^{-\sigma_{3}},\ k \in B^{E^*}_{\rho},
\end{gather}
where
\begin{gather}
D(k)
=S^{\infty}(k)
\left(
\frac{ \sqrt{2\hat{r}(k)}e^{i\pi/4}}{f(k)}
\right)^{\sigma_{3}}
\frac{1}{\sqrt{2}} \begin{pmatrix}1& -i \\ -i  &  1\end{pmatrix} \le(2\pi \zeta^{\frac{1}{2}}\ri)^{\frac{1}{2}\sigma_3}, \quad  k \in B^{E^*}_{\rho}.
\end{gather}
Therefore, we have the following matching condition on the boundary $\partial B^{E^*}_{\rho} $:
\begin{gather}
P^{E^*}(k)(S^{\infty}(k))^{-1}= I +\mathcal{O}(\frac{1}{x}),\  \text{as}\ x\to -\infty.
\end{gather}
The local parametrix of other endpoints can be constructed similarly.
Define $P(k)$ to be equal $P^{\alpha}(k)$ inside the disk $B_{\rho}^{\alpha}$, $\alpha=E,E^*,F,F^*$, and define $P(k)$ to be equal to $S^{\infty}(k)$ elsewhere. Next, define the error function
\begin{gather}
	 {\mathcal{E}}(k) =  {S}(k) \left( {P}(k) \right)^{-1}.
\end{gather}
The inequalities \eqref{in1}--\eqref{in2} and the matching properties of $P^{\alpha}(k)$, $\alpha=E,E^*,F,F^*$ allow us to conclude that
\begin{align}
{\mathcal{E}}(k)=I+\mathcal{O}(\frac{1}{x}),\ \ \text{as}\ x\to-\infty.
\end{align}
Taking into account all the transformations we performed, we are now able to explicitly solve the original RH problem for $X(k;x)$ as $x\to -\infty$ and $k\to \infty$,
\begin{equation}
\begin{split}
X(k)&= f_{\infty}^{\sigma_3}e^{ixg_{\infty}\sigma_3}T(k) e^{-ix(g(k)-k) \sigma_3}f(k)^{-\sigma_3}\\
 &= f_{\infty}^{\sigma_3}e^{ixg_{\infty}\sigma_3}S(k) e^{-ix(g(k)-k) \sigma_3}f(k)^{-\sigma_3} \\
&=f_{\infty}^{\sigma_3}e^{ixg_{\infty}\sigma_3}(I+\mathcal{O}(\frac{1}{x}))S^{\infty}(k) e^{-ix(g(k)-k) \sigma_3}f(k)^{-\sigma_3}.
\end{split}
\end{equation}
From \eqref{u-X}, we obtain
\begin{equation}
q(x,0)=2if^2_{\infty}e^{2ixg_{\infty}}\le(\lim\limits_{k\to\infty}kS_{12}^{\infty}(k)\ri)+\mathcal{O}(\frac{1}{x}).
\end{equation}
Taking into account that
\begin{align*}
2i\lim\limits_{k\to\infty}kS_{12}^{\infty}(k)=-(E_2+F_2)\dfrac{\vartheta_3\le(A(\infty)+A(E_0)+\frac{\tau}{2}+\frac{x\Omega+\Delta}{2\pi};\tau\ri)}{\vartheta_3\le( A(\infty)+A(E_0)+\frac{1}{2}+\frac{\tau}{2};\tau\ri)} \dfrac{\vartheta_3\le( A(\infty)-A(E_0)-\frac{1}{2}-\frac{\tau}{2};\tau\ri)}{\vartheta_3\le(A(\infty)-A(E_0)-\frac{\tau}{2}-\frac{x\Omega+\Delta}{2\pi};\tau\ri)},
\end{align*}
we get the asymptotics of the soliton gas $q(x,0)$ for $x\to-\infty$, the asymptotic formula  see \eqref{qx}.

\section{Behaviour of the potential $q(x,t)$ as $t \to +\infty$}
\label{sec:4}
Recall the RH problem associated with the soliton gas for the focusing NLS equation:
\begin{align}
	&X_{+}(k;x,t)=X_{-}(k;x,t)
	\begin{cases}
		\begin{pmatrix}
			1&0\\[0.5ex]
			2i\,r(k)e^{2it\theta(k)}&1
		\end{pmatrix}, & k\in \Sigma_1,\\[3ex]
		\begin{pmatrix}
			1&2i\bar r(k)e^{-2it\theta(k)}\\[0.5ex]
			0&1
		\end{pmatrix}, & k\in \Sigma_2,
	\end{cases}\\
	&X(k;x,t)=I+\mathcal{O}\!\left(\frac{1}{k}\right),\qquad k\to\infty,
\end{align}
where the phase function is given by $\theta(k)=k^2+2k\xi$ and $\xi=\frac{x}{2t}\in\mathbb{R}$.

In the region $\xi>-E_1$, where $E_1=\Re E$, the off-diagonal entries of the jump matrices are exponentially small as $t\to+\infty$. Therefore, by a standard small-norm argument, one obtains
\begin{gather}
	X(k;x,t)=I+\mathcal{O}\!\left(e^{-ct}\right),\qquad t\to+\infty,\quad c>0,\quad \xi>-E_1,
\end{gather}
and hence the corresponding potential $q(x,t)$ is asymptotically trivial.

The more interesting case is $\xi< -E_1$. To analyze this regime, we first introduce the following system of equations:
\begin{subequations}\label{xt2}
	\begin{align}
		&\mu=E_1-H_1-\xi,\label{system21}\\
		&H_2^2=E_2^2-2(E_1-\mu)(E_1-H_1),\label{system22}\\
		&\int\limits_{E}^{E^*}(k-\mu)\sqrt{\frac{(k-H)(k-H^*)}{(k-E)(k-E^*)}}\,\d k=0,\label{system23}
	\end{align}
\end{subequations}
where $\mu,H_1,H_2\in\mathbb{R}$ and $H=H_1+iH_2$. For $\xi\in(-E_1-\sqrt2\,E_2,-E_1)$, the system \eqref{xt2} admits a unique solution $\mu(\xi)$, $H_1(\xi)$, and $H_2(\xi)$; this will be proved in the next section.

In what follows, we require that the point $F$ lie on the trajectory traced by
$
H(\xi)=H_1(\xi)+iH_2(\xi),
$
that is, there exists a $\hat\xi\in(-E_1-\sqrt2\,E_2,-E_1)$ such that
$
H(\hat\xi)=F.
$
A typical trajectory of the point $H(\xi)$ for $\xi\in(-E_1-\sqrt2\,E_2,-E_1)$ is illustrated in Fig.~\ref{trajectory}.
\begin{figure}[htbp]
	\centering
	\begin{tikzpicture}
		\begin{axis}[
			width=0.5\textwidth,
			height=0.5\textwidth,
			xlabel={$\Re H=H_1$},
			ylabel={$\Im H=H_2$},
			xmin=-4.15,
			xmax=-1.05,
			ymin=-0.10,
			ymax=4.20,
			xtick={-4.0,-3.5,-3.0,-2.5,-2.0,-1.5},
			ytick={0,0.5,1.0,1.5,2.0,2.5,3.0,3.5,4.0},
			axis lines=box,
			tick align=outside,
			tick style={black},
			]
			\addplot[
			blue,
			very thick,
			smooth,
			]
			coordinates {
				(-4.00434,4.00154)
				(-3.99566,4.00154)
				(-3.99276,3.99930)
				(-3.93489,3.99930)
				(-3.93200,3.99707)
				(-3.91753,3.99707)
				(-3.91464,3.99484)
				(-3.88860,3.99484)
				(-3.88570,3.99261)
				(-3.87702,3.99261)
				(-3.87413,3.99038)
				(-3.85388,3.99038)
				(-3.85098,3.98815)
				(-3.83073,3.98592)
				(-3.82783,3.98369)
				(-3.81337,3.98369)
				(-3.81047,3.98146)
				(-3.80469,3.98146)
				(-3.80179,3.97923)
				(-3.78732,3.97923)
				(-3.77864,3.97477)
				(-3.76418,3.97477)
				(-3.75550,3.97030)
				(-3.74682,3.97030)
				(-3.74392,3.96807)
				(-3.73814,3.96807)
				(-3.73524,3.96584)
				(-3.72945,3.96584)
				(-3.72656,3.96361)
				(-3.71788,3.96361)
				(-3.71499,3.96138)
				(-3.70920,3.96138)
				(-3.70631,3.95915)
				(-3.69763,3.95915)
				(-3.68895,3.95469)
				(-3.68026,3.95469)
				(-3.67737,3.95246)
				(-3.67158,3.95246)
				(-3.66290,3.94800)
				(-3.65422,3.94800)
				(-3.64844,3.94353)
				(-3.63976,3.94353)
				(-3.63686,3.94130)
				(-3.63397,3.94130)
				(-3.63108,3.93907)
				(-3.62529,3.93907)
				(-3.61661,3.93461)
				(-3.60214,3.93238)
				(-3.59346,3.92792)
				(-3.58767,3.92792)
				(-3.58478,3.92569)
				(-3.58189,3.92569)
				(-3.57899,3.92346)
				(-3.57320,3.92346)
				(-3.55874,3.91676)
				(-3.55295,3.91676)
				(-3.54716,3.91230)
				(-3.53848,3.91230)
				(-3.53559,3.90784)
				(-3.52691,3.90784)
				(-3.52112,3.90338)
				(-3.51533,3.90338)
				(-3.48929,3.89223)
				(-3.48351,3.89223)
				(-3.47193,3.88553)
				(-3.46614,3.88553)
				(-3.46036,3.88107)
				(-3.45457,3.88107)
				(-3.44878,3.87661)
				(-3.44300,3.87661)
				(-3.43721,3.87215)
				(-3.43142,3.87215)
				(-3.42564,3.86769)
				(-3.39959,3.85876)
				(-3.39381,3.85430)
				(-3.38802,3.85430)
				(-3.37355,3.84538)
				(-3.36777,3.84538)
				(-3.36487,3.84092)
				(-3.35908,3.84092)
				(-3.35330,3.83645)
				(-3.35040,3.83645)
				(-3.34751,3.83422)
				(-3.33883,3.83199)
				(-3.33304,3.82753)
				(-3.31858,3.82307)
				(-3.30411,3.81415)
				(-3.29832,3.81415)
				(-3.29543,3.80969)
				(-3.28964,3.80969)
				(-3.28675,3.80522)
				(-3.28096,3.80522)
				(-3.27517,3.80076)
				(-3.27228,3.80076)
				(-3.26649,3.79630)
				(-3.26360,3.79630)
				(-3.24045,3.78292)
				(-3.23466,3.78292)
				(-3.23177,3.77845)
				(-3.22598,3.77845)
				(-3.22309,3.77399)
				(-3.21730,3.77399)
				(-3.21441,3.76953)
				(-3.21152,3.76953)
				(-3.20573,3.76507)
				(-3.20283,3.76507)
				(-3.17969,3.75168)
				(-3.17679,3.75168)
				(-3.17390,3.74722)
				(-3.16811,3.74722)
				(-3.16522,3.74276)
				(-3.15943,3.74276)
				(-3.15654,3.73830)
				(-3.15365,3.73830)
				(-3.14786,3.73384)
				(-3.14496,3.73384)
				(-3.13918,3.72938)
				(-3.13628,3.72938)
				(-3.13050,3.72491)
				(-3.12760,3.72491)
				(-3.12182,3.72045)
				(-3.11892,3.72045)
				(-3.11603,3.71599)
				(-3.11024,3.71599)
				(-3.10735,3.71153)
				(-3.10446,3.71153)
				(-3.10156,3.70707)
				(-3.09577,3.70707)
				(-3.09288,3.70261)
				(-3.08999,3.70261)
				(-3.08420,3.69814)
				(-3.08131,3.69814)
				(-3.06973,3.68922)
				(-3.06684,3.68922)
				(-3.06105,3.68476)
				(-3.05816,3.68476)
				(-3.05527,3.68030)
				(-3.05237,3.68030)
				(-3.04948,3.67584)
				(-3.04369,3.67584)
				(-3.04080,3.67137)
				(-3.03790,3.67137)
				(-3.02633,3.66245)
				(-3.02344,3.66245)
				(-3.01765,3.65799)
				(-3.01476,3.65799)
				(-3.01186,3.65353)
				(-3.00897,3.65353)
				(-3.00608,3.64907)
				(-3.00029,3.64907)
				(-2.99740,3.64461)
				(-2.99450,3.64461)
				(-2.99161,3.64014)
				(-2.98871,3.64014)
				(-2.98293,3.63568)
				(-2.98003,3.63568)
				(-2.97714,3.63122)
				(-2.97425,3.63122)
				(-2.96267,3.62230)
				(-2.95978,3.62230)
				(-2.95689,3.61784)
				(-2.95399,3.61784)
				(-2.95110,3.61337)
				(-2.94531,3.61337)
				(-2.94242,3.60891)
				(-2.93953,3.60891)
				(-2.93663,3.60445)
				(-2.93374,3.60445)
				(-2.93084,3.59999)
				(-2.92795,3.59999)
				(-2.92216,3.59553)
				(-2.91927,3.59553)
				(-2.91638,3.59107)
				(-2.91348,3.59107)
				(-2.91059,3.58660)
				(-2.90770,3.58660)
				(-2.90480,3.58214)
				(-2.90191,3.58214)
				(-2.89612,3.57768)
				(-2.89323,3.57768)
				(-2.89034,3.57322)
				(-2.88744,3.57322)
				(-2.88455,3.56876)
				(-2.88165,3.56876)
				(-2.87876,3.56430)
				(-2.87587,3.56430)
				(-2.86429,3.55537)
				(-2.86140,3.55537)
				(-2.85851,3.55091)
				(-2.85561,3.55091)
				(-2.85272,3.54645)
				(-2.84983,3.54645)
				(-2.84693,3.54199)
				(-2.84404,3.54199)
				(-2.84115,3.53753)
				(-2.83825,3.53753)
				(-2.83536,3.53306)
				(-2.83247,3.53306)
				(-2.82089,3.52414)
				(-2.81800,3.52414)
				(-2.81510,3.51968)
				(-2.80353,3.51076)
				(-2.80064,3.51076)
				(-2.79774,3.50629)
				(-2.79485,3.50629)
				(-2.79196,3.50183)
				(-2.78906,3.50183)
				(-2.78617,3.49737)
				(-2.78328,3.49737)
				(-2.78038,3.49291)
				(-2.77749,3.49291)
				(-2.77459,3.48845)
				(-2.77170,3.48845)
				(-2.76881,3.48399)
				(-2.76591,3.48399)
				(-2.76302,3.47952)
				(-2.76013,3.47952)
				(-2.75723,3.47506)
				(-2.75434,3.47506)
				(-2.75145,3.47060)
				(-2.74855,3.47060)
				(-2.74566,3.46614)
				(-2.74277,3.46614)
				(-2.73987,3.46168)
				(-2.73698,3.46168)
				(-2.73409,3.45722)
				(-2.73119,3.45722)
				(-2.72830,3.45276)
				(-2.71094,3.43937)
				(-2.70804,3.43491)
				(-2.70515,3.43491)
				(-2.70226,3.43045)
				(-2.69936,3.43045)
				(-2.69647,3.42599)
				(-2.69358,3.42599)
				(-2.69068,3.42152)
				(-2.68779,3.42152)
				(-2.68490,3.41706)
				(-2.68200,3.41706)
				(-2.67911,3.41260)
				(-2.67622,3.41260)
				(-2.66464,3.39922)
				(-2.66175,3.39922)
				(-2.65885,3.39475)
				(-2.65596,3.39475)
				(-2.65307,3.39029)
				(-2.65017,3.39029)
				(-2.64728,3.38583)
				(-2.62992,3.37245)
				(-2.62703,3.36798)
				(-2.62413,3.36798)
				(-2.62124,3.36352)
				(-2.61835,3.36352)
				(-2.61545,3.35906)
				(-2.60388,3.35014)
				(-2.60098,3.34568)
				(-2.58941,3.33675)
				(-2.58652,3.33229)
				(-2.58362,3.33229)
				(-2.58073,3.32783)
				(-2.57784,3.32783)
				(-2.57494,3.32337)
				(-2.56337,3.31444)
				(-2.55179,3.30106)
				(-2.54890,3.30106)
				(-2.54601,3.29660)
				(-2.54311,3.29660)
				(-2.53154,3.28321)
				(-2.51997,3.27429)
				(-2.51707,3.26983)
				(-2.51418,3.26983)
				(-2.50839,3.26091)
				(-2.50550,3.26091)
				(-2.50260,3.25644)
				(-2.48235,3.23860)
				(-2.47946,3.23414)
				(-2.47656,3.23414)
				(-2.47078,3.22521)
				(-2.46788,3.22521)
				(-2.43895,3.19398)
				(-2.43605,3.19398)
				(-2.42737,3.18283)
				(-2.41869,3.17613)
				(-2.39844,3.15383)
				(-2.38976,3.14713)
				(-2.38108,3.13598)
				(-2.36661,3.12259)
				(-2.35793,3.11144)
				(-2.34925,3.10475)
				(-2.34346,3.09582)
				(-2.32899,3.08244)
				(-2.31453,3.06459)
				(-2.30874,3.06013)
				(-2.27980,3.02667)
				(-2.27402,3.02221)
				(-2.26823,3.01328)
				(-2.26244,3.00882)
				(-2.21615,2.95528)
				(-2.21036,2.94636)
				(-2.20457,2.94190)
				(-2.18432,2.91513)
				(-2.17853,2.91067)
				(-2.15538,2.87944)
				(-2.15249,2.87944)
				(-2.14092,2.86159)
				(-2.13513,2.85713)
				(-2.12355,2.83928)
				(-2.11777,2.83482)
				(-2.10619,2.81697)
				(-2.10330,2.81697)
				(-2.08883,2.79467)
				(-2.08305,2.79020)
				(-2.07147,2.77236)
				(-2.06279,2.76343)
				(-2.04543,2.73666)
				(-2.03964,2.73220)
				(-2.01649,2.69651)
				(-2.01071,2.69205)
				(-1.98467,2.65189)
				(-1.97309,2.63851)
				(-1.89207,2.51581)
				(-1.87761,2.49127)
				(-1.86892,2.48012)
				(-1.76765,2.31058)
				(-1.76186,2.29719)
				(-1.73582,2.25258)
				(-1.73293,2.24365)
				(-1.70689,2.19904)
				(-1.70399,2.19011)
				(-1.69531,2.17673)
				(-1.69242,2.16781)
				(-1.68374,2.15442)
				(-1.59404,1.96926)
				(-1.52749,1.81534)
				(-1.49277,1.72834)
				(-1.47251,1.67257)
				(-1.46962,1.66810)
				(-1.44068,1.58780)
				(-1.40018,1.46733)
				(-1.35677,1.32456)
				(-1.31626,1.17509)
				(-1.26997,0.97209)
				(-1.24682,0.85163)
				(-1.22656,0.72893)
				(-1.20631,0.57947)
				(-1.20052,0.52370)
				(-1.19184,0.43670)
				(-1.17738,0.19354)
				(-1.16580,0.07084)
			};
		\end{axis}
	\end{tikzpicture}
	\caption{Numerical simulation of the trajectory of $H(\xi)$ in the complex plane for $E_1=-4$ and $E_2=4$..}
	\label{trajectory}
\end{figure}
\section{The modulated elliptic wave region}
\label{sec:5}
In this section, we consider the region $\hat\xi<\xi<-E_1$. This restriction is imposed to ensure the existence of a suitable $g$-function.

We begin by introducing the scalar function $\tilde g(k)$:
\begin{align}
	\label{gprime1}
	&\tilde g(k)=\int\limits_E^k \d\tilde g(\zeta),\\
	&\d\tilde g(k)=2(k-\mu)\sqrt{\frac{(k-H)(k-H^*)}{(k-E)(k-E^*)}}\,\d k,
\end{align}
which is analytic in $\mathbb{C}\setminus(\tilde\Sigma_1\cup\tilde\Sigma_2)$, where the contours $\tilde\Sigma_1\cup\tilde\Sigma_2$ are shown in Fig.~\ref{signtable2}. The parameters  $H=H_1+H_2i,\ H_1=\Re H,\ H_2=\Im H$, $\mu$ are chosen to satisfy the following conditions:
\begin{enumerate}
\item $\int\limits_{H}^{H^*}\d \tilde{g}(k)=0$.
\item Asymptotics at infinity:
\begin{align}\label{gtinfy}
\tilde{g}(k)-\theta(k)=\tilde{g}_{\infty}+\mathcal{O}(k^{-1}),\quad k\to\infty.
\end{align}
\end{enumerate}
These conditions imply that the parameters $\mu$, $H_1$, and $H_2$ satisfy
\begin{subequations}\label{xt1}
	\begin{align}
		&\mu=E_1-\xi-H_1,\label{system11}\\
		&H_2^2-2\Bigl(H_1+\frac{\xi-E_1}{2}\Bigr)^2=E_2^2-\frac{(E_1+\xi)^2}{2},\label{system12}\\
		&\int\limits_H^{H^*}(k-\mu)\sqrt{\frac{(k-H)(k-H^*)}{(k-E)(k-E^*)}}\,\d k=0,\label{system13}
	\end{align}
	where $E=E_1+iE_2$ and $H=H_1+iH_2$.
\end{subequations}

Moreover, \eqref{gtinfy} shows that the residue of $\frac{\d\tilde g}{\d k}$ at infinity vanishes. Consequently, condition \eqref{system13} can be rewritten as
\begin{align}
	\int\limits_E^{E^*}(k-\mu)\sqrt{\frac{(k-H)(k-H^*)}{(k-E)(k-E^*)}}\,\d k=0.
\end{align}
Therefore, the system \eqref{xt1} is equivalent to \eqref{xt2}.

In order for the $g$-function to be appropriate for the asymptotic analysis, the parameters in its definition must satisfy the nonlinear system \eqref{system21}--\eqref{system23}. Accordingly, the relevant asymptotic region exists only when this system is solvable. Arguing as in \cite{Boutet2}, we obtain the following result.
\begin{lemma}\label{gexist}
	For $-E_1-\sqrt{2}E_2<\xi<-E_1$, the parameters $\mu,\ H_1,\ H_2$ can be uniquely determined as functions of $\xi$ from the system \eqref{xt2}.
\end{lemma}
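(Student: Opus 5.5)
We eliminate $\mu$ and $H_2$ using the two algebraic equations, which reduces the system \eqref{xt2} to a single real transcendental equation in $H_1$. By \eqref{system21}, $\mu=E_1-H_1-\xi$, so $E_1-\mu=H_1+\xi$. Equation \eqref{system22} then gives
\[
H_2^2=E_2^2-2(H_1+\xi)(E_1-H_1)=:\Phi(H_1;\xi),
\]
which is a quadratic in $H_1$. We need $H_2>0$, so admissible values of $H_1$ are those where $\Phi>0$. Substituting into \eqref{system23} yields one real equation $\mathcal{F}(H_1;\xi)=0$, where
\[
\mathcal{F}(H_1;\xi)=\int_E^{E^*}(k-\mu)\sqrt{\frac{(k-H)(k-H^*)}{(k-E)(k-E^*)}}\,\d k .
\]
Because of the Schwarz symmetry of the integrand, $\mathcal{F}$ is real (possibly after multiplication by $i$). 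The problem is thus to show that $\mathcal{F}(\cdot;\xi)$ has exactly one zero on the admissible interval.

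Following \cite{Boutet2}, we would prove existence with an intermediate value argument on the endpoints of the admissible interval. The interval of $H_1$ is bounded by the roots of $\Phi=0$, where $H$ collides with $H^*$ on the real axis, and it should also be restricted by the requirement that $H$ lie on the segment side corresponding to the genus-one configuration. At an endpoint where $H_2=0$, the square root factor becomes $(k-H_1)/\sqrt{(k-E)(k-E^*)}$, and $\mathcal{F}$ reduces to an explicit elementary integral. A residue computation at infinity, deforming the contour around the cut $[E,E^*]$, should give closed forms for these two limits, and we would show they have opposite signs. The condition $\xi\in(-E_1-\sqrt2E_2,-E_1)$ should appear here. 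Indeed, $\Phi$ has real roots and an admissible interval exists precisely when its discriminant condition holds, which is equivalent to $(E_1+\xi)^2<2E_2^2$, that is $|\xi+E_1|<\sqrt2E_2$. At the endpoint $\xi=-E_1$ the configuration degenerates to $H=E$.

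For uniqueness, we would show that $\partial_{H_1}\mathcal{F}\neq0$ along the curve $H_2^2=\Phi(H_1;\xi)$. Differentiating under the integral gives an integrand of the form
\[
\frac{P(k)}{\sqrt{(k-E)(k-E^*)(k-H)(k-H^*)}},
\]
where $P$ is a polynomial of degree at most two. Because the residue at infinity vanishes and $\mathcal{F}=0$, $P$ can be reduced to a combination of a holomorphic differential and an exact term. The resulting $\mathscr{A}$-type period of the holomorphic differential $\d k/R$ on the genus-one surface is nonzero, which is the standard positivity from Riemann's bilinear relations. Once $\partial_{H_1}\mathcal{F}\neq0$ is established, the implicit function theorem gives smooth dependence on $\xi$, and combined with the sign change it gives a unique root. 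This monotonicity step is the main obstacle, because the derivative contains contributions both from $\mu$ and from $H_2'(H_1)$. I would first try the approach in \cite{Boutet2}: write $\partial_{H_1}\mathcal{F}$ as a real multiple of $\oint_{\mathscr{A}}\d k/R$ and check that the prefactor does not vanish. If that fails, a fallback is to parametrize instead by $\xi$ and show that the map $H_1\mapsto\xi$ is monotone.
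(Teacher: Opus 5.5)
\textbf{There is a genuine gap in your existence step: the endpoint picture is backwards.}

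As a quadratic in $H_1$, $\Phi(H_1;\xi)=2H_1^2+2(\xi-E_1)H_1+E_2^2-2\xi E_1$ has discriminant $4\bigl[(\xi+E_1)^2-2E_2^2\bigr]$. This is \emph{negative} exactly when $|\xi+E_1|<\sqrt2E_2$. Completing the square (this is \eqref{system12}) gives
\[
\Phi(H_1;\xi)=2\Bigl(H_1+\frac{\xi-E_1}{2}\Bigr)^2+E_2^2-\frac{(\xi+E_1)^2}{2}>0\qquad\text{for all } H_1\in\mathbb{R}.
\]
So throughout the stated range $H$ never reaches the real axis. It does so only at the boundary value $\xi=-E_1-\sqrt2E_2$, at $H_1=E_1+E_2/\sqrt2$. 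There are therefore no endpoints with $H_2=0$, and your intermediate-value argument has nothing to be evaluated at. The hypothesis on $\xi$ is the condition for $\Phi$ to have \emph{no} real roots, not for it to have them.

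The endpoint that does work is $H_1=E_1$. Since $\Phi(E_1;\xi)=E_2^2$, this forces $H=E$ for every $\xi$, the surface degenerates, and $\mu=-\xi$. Normalizing the branch so the square-root ratio equals $1$ there, you get $\mathcal F=\int_E^{E^*}(k+\xi)\,\mathrm{d}k=-2iE_2(E_1+\xi)\in i\mathbb{R}_{>0}$. At the other end, $\mu\sim-H_1$ and $\sqrt{(k-H)(k-H^*)}\sim\sqrt3H_1$, so $\mathcal F\sim-i\sqrt3\pi H_1^2$ as $H_1\to+\infty$, which has the opposite sign. In the paper's variables $u=(H_1-E_1)/E_2$, $v=-(\xi+E_1)/(2E_2)$, these are exactly $\mathcal H(0,v)=4v>0$ and $\mathcal H(+\infty,v)<0$.

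\textbf{Your uniqueness step uses the paper's mechanism, but you stopped before the decisive computation.} That computation, not a discriminant, is where the hypothesis on $\xi$ enters. Differentiate directly, using $\mu'(H_1)=-1$ and $\partial_{H_1}\Phi=4H_1+2\xi-2E_1$. The numerator over $\sqrt{(k-H)(k-H^*)}\sqrt{(k-E)(k-E^*)}$ then has vanishing $k^2$- and $k$-coefficients. So no exact term and no use of $\mathcal F=0$ is needed:
\[
\partial_{H_1}\mathcal F=\Bigl[6\Bigl(H_1+\frac{\xi-E_1}{2}\Bigr)^2+E_2^2-\frac{(\xi+E_1)^2}{2}\Bigr]\int_E^{E^*}\frac{\mathrm{d}k}{\sqrt{(k-H)(k-H^*)}\sqrt{(k-E)(k-E^*)}}.
\]
After the paper's rescaling, this is the identity $\mathcal H_u=\bigl(-6(u-v)^2+2v^2-1\bigr)I(u,v)$.

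The bracket is bounded below by $E_2^2-(\xi+E_1)^2/2$, which is positive precisely because $|\xi+E_1|<\sqrt2E_2$. The period is nonzero: the paper gets $I>0$ from the symmetry $\tau\mapsto-\tau$ and the sign of the branch, and your bilinear-relation argument also gives nonvanishing, hence constant sign by continuity. So $\mathcal F$ is strictly monotone in $H_1$ on $(E_1,\infty)$ and continuous up to $H_1=E_1$. Together with the corrected boundary values, this gives exactly one solution, and no fallback parametrization by $\xi$ is needed.
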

\begin{proof}
	Since $\frac{\d \tilde{g}}{\d k}(k)=(\frac{\d \tilde{g}}{\d k}(k^*))^*$ for $k\in \C \backslash (\tilde{\Sigma}_1\cup\tilde{\Sigma}_2\cup \tilde{\Sigma}_{\mu})$, we have $\int\limits_{E}^{E^*}\d \tilde{g}(k)\in i\R $, so the condition \eqref{system23} is a real condition.
	Substituting \eqref{system21} and \eqref{system22} into \eqref{system23} yields an equation
	\begin{align}\label{eq5.6}
		\int\limits_{E_1-E_2 i}^{E_1+E_2 i}(k-E_1+\xi+H_1)\sqrt{\frac{(k-H_1)^2+E_2^2+2(\xi+H_1)(H_1-E_1)}{(k-E_1)^2+E^2_2}}\d k=0.
	\end{align}
	Introducing the new variables $u=\frac{H_1-E_1}{E_2},\ v=-\frac{\xi+E_1}{2E_2}$, $\tau=\frac{k-E_1}{iE_2}$, then the equation \eqref{eq5.6} reads
	\begin{align}\label{Huv}
		\mathcal{H}(u,v)=\int\limits_{-1}^{1}\sqrt{\frac{(i\tau+u)^2+2u(u-2v)+1}{1-\tau^2}}(i\tau+2v-u)\d \tau=0,
	\end{align}
	where $u>0$ and $0<v<\frac{\sqrt{2}}{2}$. One can verify that
	\begin{align}
		&\mathcal{H}(0,v)=4v>0,\ \mathcal{H}(+\infty,v)<0, \mathcal{H}(0,0)=\mathcal{H}(\frac{\sqrt{2}}{2},\frac{\sqrt{2}}{2})=0,\nonumber\\
		&\mathcal{H}_{u}(u,v)=(-6(u-v)^2+2v^2-1)\int\limits_{-1}^{1}\sqrt{\frac{1}{(1-\tau^2)((i\tau+u)^2+1-4uv+2u^2)}}\d \tau 
	\end{align}
	Since
	$
	-6(u-v)^2+2v^2-1<0, u>0, 0<v<\frac{\sqrt2}{2},
	$
	it remains to show that
	$$
	I(u,v):=\int\limits_{-1}^{1}\sqrt{\frac{1}{(1-\tau^2)\bigl((i\tau+u)^2+1-4uv+2u^2\bigr)}}\,d\tau
	$$
	is positive. Writing
	$
	(i\tau+u)^2+1-4uv+2u^2=A(\tau)+iB(\tau),
	$
	where
	$
	A(\tau)=3u^2-4uv+1-\tau^2, B(\tau)=2u\tau,
	$
	we have
	$
	A(-\tau)=A(\tau), B(-\tau)=-B(\tau).
	$
	Hence, with the continuous branch of the square root,
	$
	\mathcal{F}(-\tau)=\overline{\mathcal{F}(\tau)},
	\mathcal{F}(\tau):=\sqrt{\frac{1}{(1-\tau^2)\bigl(A(\tau)+iB(\tau)\bigr)}}.
	$
	Therefore,
	$
	I(u,v)=\int_{-1}^{1}\mathcal{F}(\tau)\,d\tau
	=2\Re\int_0^1 \mathcal{F}(\tau)\,d\tau\in\mathbb R.
	$
	Moreover,
	$
	3u^2-4uv+1
	=
	3\Bigl(u-\frac{2v}{3}\Bigr)^2+1-\frac{4v^2}{3}>0
	$
	for \(0<v<\frac{\sqrt2}{2}\), so \(\mathcal{F}(0)>0\). By continuity of the chosen branch, it follows that \(I(u,v)>0\). Consequently,
	$
	\mathcal H_u(u,v)=(-6(u-v)^2+2v^2-1)\,I(u,v)<0\  \text{for}\  (u,v)\neq (\frac{\sqrt{2}}{2},\frac{\sqrt{2}}{2}).
	$
	Therefore, \eqref{Huv} determines a unique function $u=u(v),\ v\in (0,\frac{\sqrt{2}}{2})$. Consequently, we have that the system \eqref{xt2} determines uniquely $H_1(\xi)$, $H_2(\xi)$, $\mu(\xi)$ for $\xi\in (-E_1-\sqrt{2}E_2, -E_1)$. Moreover, $H_1(-E_1)=E_1$ and $H_1(-E_1-\sqrt{2}E_2)=E_1+\frac{E_2}{\sqrt{2}}$.
\end{proof}
By construction of the $g$-function, when $\xi=-E_1$, the branch point $H(-E_1)=H_1(-E_1)+iH_2(-E_1)$ must coincide with the branch point $E=E_1+iE_2$. On the other hand, when $\xi=-E_1-\sqrt{2}E_2$, one has $H_2(-E_1-\sqrt{2}E_2)=0$ and $\mu(-E_1-\sqrt{2}E_2)=H_1(-E_1-\sqrt{2}E_2)=E_1+\frac{E_2}{\sqrt{2}}$.
Next we consider the distribution of the signs of $\Im (\tilde{g}(k))$. At $k=H$,
\begin{align}
	\tilde{g}(H)=\int\limits_{E}^{H}\d \tilde{g}(k)=(\int\limits_{E}^{H}+\int\limits_{E^*}^{H^*})(k-\mu)\frac{\sqrt{(k-H)(k-H^*)}}{\sqrt{(k-E)(k-E^*)}}\d k\in \R,
\end{align}
where we have used the symmetry $(\tilde{g}'(k))^{*}=\tilde{g}'(k^*)$. Since $\tilde{g}'(k)=(k-k_0)^{-\frac{1}{2}}$ as $k\to k_0,\ k_0=E,E^*$, $\tilde{g}'(k)=(k-k_1)^{\frac{1}{2}}$ as $k\to k_1,\ k_1=H,H^*$ and $\Im (\tilde{g}(E))=\Im (\tilde{g}(H))=\Im (\tilde{g}(H^*))=\Im (\tilde{g}(E^*))=0$, then the curve $\Im (\tilde{g}(k))=0$ must have three branches going out from $H$ and $H^*$, the points $E$ and $H$ are connected by the zero-level curve $\tilde{\Sigma}_{1}$ and the points $E^*$ and $H^*$ are connected by $\tilde{\Sigma}_{2}$, where $\Im (\tilde{g}(k))=0$ on $\tilde{\Sigma}_{1}\cup\tilde{\Sigma}_{2}$. Since $\tilde{g}(k)$ is real on the real axis and $\mu$ is a zero of $\tilde{g}'(k)$, then the curve $\Im (\tilde{g}(k))=0$ must have four branches going out from the point $\mu$, there exists a curve $\tilde{\Sigma}_{\mu}$ connecting $H^*$ and $H$, passing through $\mu$, where $\Im (\tilde{g}(k))=0$. Since $\tilde{g}(k)$ behaves like $\theta(k)$ for large $k$, then there exists a curve where $\Im (\tilde{g}(k))=0$, starting from $H$ and going to infinity along the asymptotic line $\Re k=-\xi$. These contours and signature table of the function $\Im (\tilde{g}(k))$ are depicted in Fig.\ref{signtable2}. Finally, once $\xi$ leaves $\hat{\xi}$, the point $F$ moves into the region where $\Im \tilde g$ is positive. To clarify this, we require the following lemma.
\begin{lemma}\label{lem:Im-gF-monotone}
	Let
$G(\xi):=\Im \tilde g(F;\xi), F:=H(\hat{\xi})$, where $F$ is fixed. Then
	$G'(\xi)=\Im\int_{H(\xi)}^{F}\partial_{\xi}\!\bigl(d\tilde g(k;\xi)\bigr)$.
	In particular, 
	\[
	\Im\int_{H(\xi)}^{F}\partial_{\xi}\!\bigl(d\tilde g(k;\xi)\bigr)>0,
	\]
	then $\Im \tilde g(F;\xi)$ is strictly increasing with respect to~$\xi$ and $\Im \tilde g(F;\xi)>0$ for $\xi\in (\hat{\xi},-E_1)$.
\end{lemma}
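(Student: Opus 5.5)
Write $G(\xi)=\Im\tilde g(F;\xi)=\Im\int_E^F d\tilde g(k;\xi)$ and split the path of integration at the moving branch point $H(\xi)$:
\[
G(\xi)=\Im\int_E^{H(\xi)}d\tilde g(k;\xi)+\Im\int_{H(\xi)}^{F}d\tilde g(k;\xi).
\]
We have already shown that $\tilde g(H)\in\R$, so the first term is identically zero in $\xi$. Only the second term remains. Differentiate it in $\xi$ by the Leibniz rule. The boundary term coming from the moving endpoint $H(\xi)$ is $-\,\tilde g'(H(\xi))H'(\xi)$, and it vanishes because $\tilde g'(k)=\mathcal O((k-H)^{1/2})$ at $k=H$. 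The endpoints $E,E^*$ do not depend on $\xi$.

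We can also differentiate the integrand directly. Formally, $\partial_\xi(d\tilde g)$ contains a factor $\partial_\xi\sqrt{(k-H)(k-H^*)}=-\tfrac12\,\frac{H'(k-H^*)+H^{*\prime}(k-H)}{\sqrt{(k-H)(k-H^*)}}$, which has only an integrable $(k-H)^{-1/2}$ singularity. So dominated convergence justifies differentiating under the integral sign; here we use Lemma \ref{gexist} together with smoothness of $H(\xi)$ and $\mu(\xi)$, which follows from the implicit function theorem since $\mathcal H_u<0$. This gives the identity $G'(\xi)=\Im\int_{H(\xi)}^{F}\partial_\xi\bigl(d\tilde g(k;\xi)\bigr)$.

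For the sign, I would write $\partial_\xi(d\tilde g)$ explicitly. The normalisation \eqref{gtinfy} fixes the leading behaviour, and $\partial_\xi\theta=2k$, so $\partial_\xi(d\tilde g)=\frac{2(k-\nu)}{\sqrt{(k-E)(k-E^*)(k-H)(k-H^*)}}\,dk$ up to a constant multiple; that is, it is a (normalised) second-kind/holomorphic-type differential on the genus-one surface with branch points $E,E^*,H,H^*$. The conditions \eqref{system21}--\eqref{system23} hold for every $\xi$. Differentiating them shows that this differential has zero period around the cycle through $E,E^*$, which fixes $\nu$ as real. With this normalisation, $\Im\int_{H}^{k}\partial_\xi(d\tilde g)$ is a harmonic function on the upper half plane off the cuts. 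It vanishes on $\tilde\Sigma_1$ and grows like $\Im(2k)>0$ at infinity.

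The main obstacle is to show that this function is positive at $F$ for all $\xi\in(\hat\xi,-E_1)$. I would try a maximum-principle argument in the region bounded by $\tilde\Sigma_1$, $\tilde\Sigma_\mu$ and the real axis, using the sign of the integrand along the zero-level curves. Failing that, I would fall back on the explicit scaled variables $u,v$ from the proof of Lemma \ref{gexist} and a direct estimate analogous to the one used there for $I(u,v)$.

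Granting positivity, $G$ is strictly increasing on $(\hat\xi,-E_1)$. At $\xi=\hat\xi$ we have $H(\hat\xi)=F$, so $G(\hat\xi)=\Im\tilde g(H(\hat\xi);\hat\xi)=0$ by the reality of $\tilde g(H)$. Monotonicity then gives $G(\xi)>0$ for $\xi\in(\hat\xi,-E_1)$, which is the claim.
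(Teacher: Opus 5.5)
\textbf{Gap: the positivity step.} Three parts of your proposal are fine. The reduction $G(\xi)=\Im\int_{H(\xi)}^{F}d\tilde g(k;\xi)$ is correct. So is the derivative identity, which you justify more carefully than the paper, via the integrable $(k-H)^{-1/2}$ singularity and the implicit function theorem. The closing step using $G(\hat\xi)=0$, which the paper leaves implicit, is also fine.

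The gap is the inequality $\Im\int_{H(\xi)}^{F}\partial_\xi(d\tilde g)>0$ itself. Read as a bare conditional, the lemma would follow from what you wrote. But the paper's proof asserts this inequality, and the lemma is invoked precisely to conclude $\Im\tilde g(F;\xi)>0$. You only grant it.

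The route you sketch toward it also fails as stated, for two reasons.
\begin{enumerate}
\item Your formula for $\partial_\xi(d\tilde g)$ is wrong. Differentiating $\sqrt{(k-H)(k-H^*)}$ produces a factor $1/\tilde R(k)$, and collecting terms gives a \emph{quadratic} numerator. Its leading coefficient is $-2(\mu'+H_1')=2$ by \eqref{system21}; in fact $\partial_\xi(d\tilde g)=2\,\frac{k^2-e_2k+b}{\tilde R(k)}\,dk$, with $e_2$ as in \eqref{tfinfty} and $b$ real. A linear numerator $2(k-\nu)$ would give a third-kind differential with only logarithmic growth, contradicting the $\Im(2k)$ growth you use next.
\item $\Im\int_{H}^{k}\partial_\xi(d\tilde g)$ has no reason to vanish on $\tilde\Sigma_1$ or on $\tilde\Sigma_\mu$. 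Those are zero sets of $\Im\tilde g$, not of this new function. So the maximum-principle region you propose carries no boundary information. Moreover, in the configuration of Fig.~\ref{signtable2}, $F$ sits in the $+$ region to the right of $\tilde\Sigma_\mu$, outside that region.
\end{enumerate}

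\textbf{The missing idea.} The paper relies on the following identification: $\frac12\partial_\xi(d\tilde g)$ has exactly the defining properties of $dg$ from Section~\ref{sec:3}, with $F$ replaced by $H(\xi)$. It behaves like $(1+\mathcal O(k^{-2}))\,dk$ at infinity. It also satisfies $\int_{H}^{H^*}\frac12\partial_\xi(d\tilde g)=0$, obtained by differentiating $\int_H^{H^*}d\tilde g=0$, where the endpoint terms vanish. At $\xi=\hat\xi$ it coincides with $dg$.

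Hence the sign structure of Fig.~\ref{signtable1} carries over. Place the cut on the zero-level arc of $\Im\int_{H(\xi)}^{k}\frac12\partial_\xi(d\tilde g)$ joining $E$ to $H(\xi)$. This harmonic function vanishes on $\R$ (Schwarz symmetry plus the normalization) and on that arc, and it grows like $\Im k$. The minimum principle therefore makes it positive elsewhere in $\C_+$, and evaluating at $F$ gives $G'(\xi)>0$.

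To make this airtight you would still need $F$ to lie off that arc, and not between it and $\tilde\Sigma_1$. In that region the branch of $\tilde R$ used to define $\tilde g$ differs by a sign, so the sign of $G'$ would flip; the paper passes over this point too. Without the identification, though, your proposal has no mechanism that produces the sign at all.
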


\begin{proof}
	Since $H(\xi)$ is a branch point of the differential $d\tilde g(k;\xi)$, one has
	$d\tilde g(H(\xi);\xi)=0$.
	Moreover, by the construction of the $g$-function, we also have
	$\Im \tilde g(H(\xi);\xi)=0$.
	Therefore,
	$$
	G(\xi)
	=
	\Im\bigl(\tilde g(F;\xi)-\tilde g(H(\xi);\xi)\bigr)
	=
	\Im\int_{H(\xi)}^{F} d\tilde g(k;\xi).
	$$
	Differentiating with respect to $\xi$, and using the vanishing of the endpoint contribution at $k=H(\xi)$, we obtain
	$
	G'(\xi)
	=
	\Im\int_{H(\xi)}^{F}\partial_{\xi}\!\bigl(d\tilde g(k;\xi)\bigr).
	$
	Next, from the asymptotic expansion
	$
	\frac{d\tilde g}{dk}(k;\xi)=2k+2\xi+\mathcal O(k^{-2}), k\to\infty,
	$
	it follows that $\frac{1}{2}\partial_{\xi}\!\bigl(d\tilde g(k;\xi)\bigr)$ is a normalized Abelian differential of the second kind on the elliptic curve
	$
	\tilde{R}^2(k)=(k-H)(k-H^*)(k-E)(k-E^*),
	$
	satisfying
	\[
	\int_H^{H^*}\partial_{\xi}\!\bigl(d\tilde g\bigr)=0,
	\qquad
	\frac{1}{2}\partial_{\xi}\!\bigl(d\tilde g(k;\xi)\bigr)
	=
	\bigl(1+\mathcal O(k^{-2})\bigr)\,dk,
	\qquad k\to\infty.
	\]
	Hence it can be represented in the form
	\[
	\partial_{\xi}\!\bigl(d\tilde g(k;\xi)\bigr)
	=
	2\frac{k^2-a(\xi)k+b(\xi)}{\tilde{R}(k;\xi)}\,dk,
	\]
	where $a(\xi)$ and $b(\xi)$ are real-valued functions.
	
	Consequently, the monotonicity of $G(\xi)$ is reduced to the sign of the imaginary part of the above normalized differential integrated along the admissible arc from $H(\xi)$ to $F$. In particular, from the analysis of the signature table of $\Im g$, see Fig.~\ref{signtable1}, we get
	\[
	\Im\int_{H(\xi)}^{F}\partial_{\xi}\!\bigl(d\tilde g(k;\xi)\bigr)>0,
	\]
	then
	$
	G'(\xi)>0,
	$
	which shows that $\Im \tilde g(F;\xi)$ is strictly increasing with respect to~$\xi$ for $\xi\in (\hat{\xi},-E_1)$.
\end{proof}
\begin{figure}
\centering
\scalebox{.75}{
\begin{tikzpicture}[line cap=round,line join=round,>=triangle 45,x=1cm,y=1cm,>=stealth]
\clip(-8.42,-5.18) rectangle (8.42,5.18);
\draw [samples=50,domain=-0.99:0.99,rotate around={0:(-2.6,0)},xshift=-2.6cm,yshift=0cm,line width=1pt] plot ({0.23*(1+(\x)^2)/(1-(\x)^2)},{1.05*2*(\x)/(1-(\x)^2)});
\draw[->- = .7,thick] (-5,3)-- (-2.03,2.35);
\draw [->- = .7,thick,dashed] (-5,3)-- (-1.5,1);
\draw [->- = .7,thick,dashed] (-2.03,2.35)--(-1.5,1);
\draw[->- = .7,thick] (-1.99,-2.53)--(-5,-3);
\draw [->- = .7,thick,dashed]  (-1.5,-1)--(-5,-3);
\draw [->- = .7,thick,dashed] (-1.5,-1)--(-1.99,-2.53);
\draw [line width=1pt,dashed] (-1,-5.18) -- (-1,5.18);
\draw [line width=1pt,domain=-8.42:8.42] plot(\x,{(-0-0*\x)/3.54});
\begin{scriptsize}
\draw [fill=black] (-5,3) circle (1pt);
\draw[color=black] (-4.84,3.43) node {$E$};
\draw[color=black] (-4,4) node {$\mathlarger{\mathlarger{\mathlarger{-}}}$};
\draw[color=black] (-4,2) node {$\mathlarger{\mathlarger{\mathlarger{-}}}$};
\draw[color=black] (-2.5,2) node {$\mathlarger{\mathlarger{\mathlarger{-}}}$};
\draw[color=black] (-4,-4) node {$\mathlarger{\mathlarger{\mathlarger{+}}}$};
\draw[color=black] (-4,-2) node {$\mathlarger{\mathlarger{\mathlarger{+}}}$};
\draw[color=black] (-2.5,-2) node {$\mathlarger{\mathlarger{\mathlarger{+}}}$};
\draw [fill=black] (-2.03,2.35) circle (1pt);
\draw[color=black] (-1.8,2.5) node {$H$};
\draw[color=black] (-1.5,2) node {$\tilde\Sigma_{H}$};
\draw[color=black] (-1.3,-2) node {$\tilde\Sigma_{H^*}$};
\draw [fill=black] (-1.5,1) circle (1pt);
\draw[color=black] (-1.4,0.7) node {$F$};
\draw[color=black] (-2,0.4) node {$\tilde{\Sigma}_{\mu}$};
\draw[color=black] (-1.5,3) node {$\mathlarger{\mathlarger{\mathlarger{+}}}$};
\draw[color=black] (2,2.7) node {$\mathlarger{\mathlarger{\mathlarger{+}}}$};
\draw[color=black] (-1.9,1.6) node {$\mathlarger{\mathlarger{\mathlarger{+}}}$};
\draw[color=black] (-1.5,-3) node {$\mathlarger{\mathlarger{\mathlarger{-}}}$};
\draw[color=black] (2,-2.7) node {$\mathlarger{\mathlarger{\mathlarger{-}}}$};
\draw[color=black] (-1.9,-1.6) node {$\mathlarger{\mathlarger{\mathlarger{-}}}$};
\draw [fill=black] (-5,-3) circle (1pt);
\draw[color=black] (-4.84,-2.57) node {$E^*$};
\draw [fill=black] (-1.99,-2.53) circle (1pt);
\draw[color=black] (-1.7,-2.54) node {$H^*$};
\draw [fill=black] (-1.5,-1) circle (1pt);
\draw[color=black] (-1.4,-0.7) node {$F^*$};
\draw[color=black] (-2.7,-0.3) node {$\mu$};
\draw[color=black] (-0.7,-0.3) node {$-\xi$};
\draw[color=black] (-3,2.85) node {$\tilde{\Sigma}_{1}$};
\draw[color=black] (-3,-3) node {$\tilde{\Sigma}_{2}$};
\draw[color=black] (-3,2.2) node {$\tilde{\Lambda}_{1}$};
\draw[color=black] (-3,-2.3) node {$\tilde{\Lambda}_{2}$};
\end{scriptsize}
\end{tikzpicture}
}
\caption{Signature table of $\Im (\tilde{g}(k))$, $\Im(\tilde{g}(k))=0$ on the solid line.}
\label{signtable2}
\end{figure}
Then the scalar function $\tilde{g}(k)$ satisfies the jump conditions:
\begin{align}
&\tilde{g}_+(k) + \tilde{g}_-(k) = 0,& k \in \tilde{\Sigma}_{1}\cup \tilde{\Sigma}_{2}, \label{c1}\\
&\tilde{g}_+(k)-\tilde{g}_-(k) =  \tilde \Omega,& k \in  \tilde{\Sigma}_{\mu}, \label{c2}
\end{align}
where
\begin{equation}
 \label{tOmega}
\tilde{\Omega}=4\int\limits_{E}^{H}(k-\mu)
\le(\sqrt{\frac{{(\zeta-H)(\zeta-H^*)}}{{(\zeta-E)(\zeta-E^*)}}}\ri)_{+}\d\zeta\in \R,
\end{equation}
By \eqref{gprime1},  we  obtain
\begin{align}\label{tginfty}
\tilde{g}_{\infty}=(\int\limits_{E}^{\infty}+\int\limits_{E^*}^{\infty})[(\zeta-\mu)\sqrt{\frac{{(\zeta-H)(\zeta-H^*)}}{{(\zeta-E)(\zeta-E^*)}}}-\xi-\zeta]\d\zeta+E^2_2-E^2_1-2E_1\xi\in \R.
\end{align}
Then we define a new matrix-valued function $\tilde{X}(k;x,t)$ by
\begin{align}
\tilde{X}(k;x,t)=X(k;x,t)\begin{cases}
\begin{pmatrix} {1}&0\\ \displaystyle {2ir(k) e^{2it\theta(k)} } & {1} \end{pmatrix}, & k \in \tilde{\Lambda}_1, \\
\begin{pmatrix} {1}&\displaystyle {2i\bar{r}(k) e^{ -2 it\theta(k)} }\\0 & {1} \end{pmatrix}, &  k \in \tilde{\Lambda}_2,\\
I,& elsewhere,
\end{cases}
\end{align}
where the domains $\tilde{\Lambda}_1$ and $\tilde{\Lambda}_2$ see Fig.\ref{signtable2}.
The result jumps for $\tilde{X}(k;x,t)$ are
\begin{align}
\tilde{X}_{+}(k;x,t)=\tilde{X}_{-}(k;x,t)\begin{cases}
\begin{pmatrix} {1}&0\\ \displaystyle {2ir(k) e^{2it\theta(k)} } & {1} \end{pmatrix}, & k \in \tilde{\Sigma}_1\cup \tilde{\Sigma}_{H}, \\
\begin{pmatrix} {1}& \displaystyle {2i\bar{r}(k) e^{ -2 it\theta(k)} }\\ 0 & {1} \end{pmatrix}, &  k \in \tilde{\Sigma}_2\cup \tilde{\Sigma}_{H^*}.
\end{cases}
\end{align}
where $\tilde{\Sigma}_{H}$ is an oriented contour connecting the point $H$ and $F$, and $\tilde{\Sigma}_{H^*}$ is an oriented contour connecting the point $F^*$ and $H^*$, see Fig.\ref{signtable2}. 
Now we make the new transformation $\tilde{X}(k;x,t)\mapsto \tilde{T}(k;x,t)$  by
\begin{gather}
\tilde{T}(k;x,t) =\tilde{f}_{\infty}^{-\sigma_3}e^{-it\tilde{g}_{\infty}\sigma_3} \tilde{X}(k;x,t) e^{it(\tilde{g}(k)-\theta(k))\sigma_3}\tilde{f}(k)^{\sigma_3}.
\end{gather}
The jump conditions for $\tilde{T}(k;x,t)$ are given by
\begin{align}
\label{RH_T}
&\tilde{T}_+(k)=\tilde{T}_-(k)V_{\tilde{T}}(k),\\
&V_{\tilde{T}}(k)=\begin{cases}
\displaystyle \begin{pmatrix} e^{it\le(\tilde{g}_+(k) - \tilde{g}_-(k)\ri)}\dfrac{\tilde{f}_+(k)}{\tilde{f}_-(k)}&0\\  2i {r}(k)e^{it(\tilde{g}_{+}(k)+\tilde{g}_{-}(k))}\tilde{f}_+(k)\tilde{f}_-(k) & e^{-it\le(\tilde{g}_+(k) - \tilde{g}_-(k)\ri)}\dfrac{\tilde{f}_-(k)}{\tilde{f}_+(k)} \end{pmatrix}, &\quad k \in  \tilde{\Sigma}_1\cup\tilde{\Sigma}_{H},\\
\displaystyle \begin{pmatrix} e^{it\le(\tilde{g}_+(k) - \tilde{g}_-(k)\ri)}\dfrac{\tilde{f}_+(k)}{\tilde{f}_-(k)} & \dfrac{2i \bar{r}(k)e^{-it(\tilde{g}_{+}(k)+\tilde{g}_{-}(k))}}{\tilde{f}_+(k)\tilde{f}_-(k)}\\0  & e^{-it\le(\tilde{g}_+(k) - \tilde{g}_-(k)\ri)}\dfrac{\tilde{f}_-(k)}{\tilde{f}_+(k)} \end{pmatrix}, & \quad k \in \tilde{\Sigma}_2\cup \tilde{\Sigma}_{H^*}, \\
\displaystyle \begin{pmatrix} e^{it\tilde{\Omega} }\dfrac{\tilde{f}_+(k)}{\tilde{f}_-(k)} & 0 \\0& e^{-it \tilde{\Omega}}\dfrac{\tilde{f}_-(k)}{\tilde{f}_+(k)} \end{pmatrix}, &\quad k \in  \tilde{\Sigma}_{\mu}.
\end{cases}
 \end{align}
Then we choose the function $\tilde{f}(k)$ to simplify the jumps on $\tilde{\Sigma}_{1}$ and $\tilde{\Sigma}_{2}$ via
 \begin{align}
& \tilde{f}_+(k) \tilde{f}_-(k) =\frac{1}{2r(k)}, & k \in \tilde{\Sigma}_{1}, \label{Tfcon1} \\
&\tilde{f}_+(k) \tilde{f}_-(k) ={2\bar{r}(k)}, & k \in \tilde{\Sigma}_{2} ,\\
& \frac{\tilde{f}_+(k)}{ \tilde{f}_-(k)} =e^{i\tilde{\Delta}} ,& k \in \tilde{\Sigma}_{\mu}, \\
& \tilde{f}(k) = \tilde{f}_{\infty}+\mathcal{O}\le(\frac{1}{k}\ri) ,& k \to  \infty  \ .\label{Tfcon3}
\end{align}
It is easy to check that the function $\tilde{f}(k)$ is given by
\begin{equation}\label{tf}
\tilde{f}(k)=\exp\le\{ \frac{\tilde{R}(k)}{2\pi i}\left[\int_{ \tilde{\Sigma}_{1}}  \frac{\log  (\frac{1}{2r(\zeta)}) }{\tilde{R}_{+}(\zeta)(\zeta-k)} \d \zeta+
\int_{\tilde{\Sigma}_{2}}  \frac{\log (2\bar{r}(\zeta ))}{\tilde{R}_{+}(\zeta)(\zeta-k)} \d \zeta+\int^{H^*}_{H}  \frac{i\tilde{\Delta}}{\tilde{R}(\zeta)(\zeta-k)} \d \zeta
\right] \ri\} \, ,
\end{equation}
where $\tilde{R}(k)=\sqrt{(k-E)(k-E^*)(k-H)(k-H^*)}$ and the condition  \eqref{Tfcon3} determines $\tilde{\Delta}$ as
\begin{align}\label{tDelta}
\tilde{\Delta}&=i\left[\int_{\tilde{\Sigma}_{1}}  \frac{\log (\frac{1}{2r(\zeta)})}{\tilde{R}_{+}(\zeta)} \d \zeta+
\int_{\tilde{\Sigma}_{2}}  \frac{\log (2\bar{r}(\zeta))}{\tilde{R}_{+}(\zeta)} \d \zeta\right]\left[\int^{H^*}_{H}  \frac{ \d \zeta}{\tilde{R}(\zeta)}\right]^{-1},
\end{align}
and
\begin{align}\label{tfinfty}
\tilde{f}_{\infty}=\exp\{ -\frac{1}{2\pi i}[\int_{ \tilde{\Sigma}_{1}}  \frac{\log  (\frac{1}{2r(\zeta)})(\zeta-e_2) }{\tilde{R}_{+}(\zeta)} \d \zeta+
\int_{\tilde{\Sigma}_{2}}  \frac{\log (2\bar{r}(\zeta ))(\zeta-e_2)}{\tilde{R}_{+}(\zeta)} \d \zeta+\int^{H^*}_{H}  \frac{i\tilde{\Delta}(\zeta-e_2)}{\tilde{R}(\zeta)} \d \zeta
]\},
\end{align}
with $e_2=\frac{E+E^*+H+H^*}{2}$.
Since $\tilde{g}(k)$ and $\tilde{f}(k)$ do not have jumps across $\tilde{\Sigma}_{H}\cup \tilde{\Sigma}_{H^*}$, then $\tilde{T}(k;x,t)$ satisfies the following RH problem.
\begin{RHP}\label{RHP6} Find a matrix-valued function $\tilde{T}(k;x,t)$ with the following properties
\begin{enumerate}
\item  $\tilde{T}(k;x,t)$  is analytic for $k\in\mathbb{C}\backslash (\tilde{\Sigma}_{1}\cup\tilde{\Sigma}_{2}\cup \tilde{\Sigma}_{\mu}\cup \tilde{\Sigma}_{H}\cup \tilde{\Sigma}_{H^*})$, where the jump contours are shown in Fig.\ref{signtable2}.
\item  For $k\in \tilde{\Sigma}_{1}\cup\tilde{\Sigma}_{2}\cup \tilde{\Sigma}_{\mu}\cup \tilde{\Sigma}_{H}\cup \tilde{\Sigma}_{H^*}$, the boundary values $\tilde{T}_{\pm}(k;x,t)$ satisfy the following jump relation
\begin{align}
\tilde{T}_+(k) = \tilde{T}_-(k)  \begin{cases}
\displaystyle \begin{pmatrix} e^{it\le(\tilde{g}_+(k) - \tilde{g}_-(k)\ri)} \frac{\tilde{f}_{+}(k)}{\tilde{f}_{-}(k)}& 0 \\ i& e^{-it\le(\tilde{g}_+(k) - \tilde{g}_-(k)\ri)} \frac{\tilde{f}_{-}(k)}{\tilde{f}_{+}(k)}\end{pmatrix}, &\quad k \in  \tilde{\Sigma}_{1},\\
\displaystyle \begin{pmatrix} e^{it\le(\tilde{g}_+(k) - \tilde{g}_-(k)\ri)}  \frac{\tilde{f}_{+}(k)}{\tilde{f}_{-}(k)}&i \\ 0& e^{-it\le(\tilde{g}_+(k) - \tilde{g}_-(k)\ri)} \frac{\tilde{f}_{-}(k)}{\tilde{f}_{+}(k)} \end{pmatrix}, & \quad k \in \tilde{\Sigma}_{2}, \\
\displaystyle \begin{pmatrix} 1  & 0\\   2i {r}(k)\tilde{f}^2(k)e^{2it\tilde{g}(k)} & 1 \end{pmatrix}, &\quad k \in  \tilde{\Sigma}_{H},\\
\displaystyle \begin{pmatrix} 1 & \frac{2i\bar{r}(k)}{\tilde{f}^2(k)} e^{-2it \tilde{g}(k)}\\0 & 1  \end{pmatrix}, & \quad k \in \tilde{\Sigma}_{H^*}, \\
\displaystyle\begin{pmatrix} e^{i(\tilde{\Omega} t + \tilde{ \Delta} ) } &0 \\ 0 & e^{-i(\tilde \Omega t + \tilde{ \Delta} ) }  \end{pmatrix}, &\quad k \in   \tilde{\Sigma}_{\mu}.
 \end{cases}\end{align}
\item
$
 \tilde{T}(k) = I + \mathcal{O}\le(\frac{1}{k}\ri), \qquad k \to \infty.
$
\end{enumerate}
\end{RHP}
\subsubsection{Opening lenses}
The jump matrix of $\tilde{T}(k;x,t)$ on $\tilde{\Sigma}_{1}$  can be decomposed into
\begin{align*}
& \begin{pmatrix} e^{it\le(\tilde{g}_+(k) - \tilde{g}_-(k)\ri)}\dfrac{\tilde{f}_+(k)}{\tilde{f}_-(k)}&0 \\ i & e^{-it\le(\tilde{g}_+(k) - \tilde{g}_-(k)\ri)}\dfrac{\tilde{f}_-(k)}{\tilde{f}_+(k)} \end{pmatrix}\\
&\quad\quad=
\begin{pmatrix}1& -i\dfrac{e^{-2it\tilde{g}_{-}(k)}}{2\hat{r}(k)\tilde{f}_-^2 (k)}\\ 0 &1 \end{pmatrix}
\begin{pmatrix}0&i\\i&0\end{pmatrix}
\begin{pmatrix}1&-i\dfrac{e^{-2it\tilde{g}_{+}(k)}}{2\hat{r}(k)\tilde{f}_+^2(k)}\\0&1 \end{pmatrix},
\end{align*}
and on $\tilde{\Sigma}_{2}$ as
\begin{align*}
&\begin{pmatrix} e^{it\le(\tilde{g}_+(k) - \tilde{g}_-(k)\ri)}\dfrac{\tilde{f}_+(k)}{\tilde{f}_-(k)} & i \\ 0  & e^{-it\le(\tilde{g}_+(k) - \tilde{g}_-(k)\ri)}\dfrac{\tilde{f}_-(k)}{\tilde{f}_+(k)} \end{pmatrix}\\
&\quad\quad=
\begin{pmatrix}1&0\\-i\dfrac{e^{2it\tilde{g}_{-}(k)}\tilde{f}_-^2 (k)}{2\hat{r}(k)}& 1 \end{pmatrix}
\begin{pmatrix}0&i \\ i &0\end{pmatrix}
\begin{pmatrix}1&0\\-i\dfrac{e^{2it \tilde{g}_{+}(k)}\tilde{f}^2_+(k)}{2\hat{r}(k)}& 1 \end{pmatrix},
\end{align*}
where   $\hat{r}(k)$  is the analytic  continuation of the function $r(k)$ and $\bar{r}(k)$  off the contours $\tilde{\Sigma}_{1}\cup \tilde{\Sigma}_{2}$  and satisfy the following condition:
$
\hat{r}(k)= r(k), \ k\in \tilde{\Sigma}_{1},\ \ \hat{r}(k)=\bar{r}(k), \ k\in \tilde{\Sigma}_{2}.
$
With the help of these factorizations, we can open lenses $\tilde{\mathcal{C}}_j$ around each $\tilde{\Sigma}_{j}$, $j=1,2$.  We use these lenses to define the transformation
\begin{equation}
	\tilde{S}(k;x,t) = \begin{dcases}
		\tilde{T}(k;x,t)\begin{pmatrix} 1 &  i\frac{e^{-2it\tilde{g}(k)}} {2\hat{r}(k)\tilde{f}^2(k) }  \\0 & 1 \end{pmatrix}
		& k\in\text{ the lens above $\tilde{\Sigma}_1$ } ,\\
\tilde{T}(k;x,t) \begin{pmatrix} 1 &  -i\frac{e^{-2it\tilde{g}(k)}} {2\hat{r}(k)\tilde{f}^2(k) }  \\0 & 1 \end{pmatrix}
		& k\in\text{ the lens below $\tilde{\Sigma}_1$ } ,\\
		\tilde{T}(k;x,t) \begin{pmatrix} 1 &0\\ i\frac{\tilde{f}^2(k)} {2\hat{r}(k)}  e^{2it\tilde{g}(k) }&1 \end{pmatrix}
		& k\in \text{ the lens below $\tilde{\Sigma}_2$} ,\\
\tilde{T}(k;x,t) \begin{pmatrix} 1 &0\\ -i\frac{\tilde{f}^2(k)} {2\hat{r}(k)}  e^{2it\tilde{g}(k) }&1 \end{pmatrix}
		& k\in \text{ the lens above $\tilde{\Sigma}_2$} ,\\
		\tilde{T}(k;x,t)
		& \text{elsewhere},
	\end{dcases}
\end{equation}
the lenses are shown in Fig.\ref{openlenseswtS}.
The matrix-valued function $\tilde{S}(k;x,t)$ satisfies the following RH problem.
\begin{RHP}\label{RHP7} Find a matrix-valued function $\tilde{S}(k;x,t)$ with the following properties
\begin{enumerate}
\item $\tilde{S}(k;x,t)$ is analytic for $k\in\C\backslash \le(\tilde{\mathcal {C}}_1\cup\tilde{{\mathcal C}}_2\cup \tilde{\Sigma}_{1}\cup\tilde{\Sigma}_{2}\cup\tilde{\Sigma}_{H}\cup\tilde{\Sigma}_{H^*} \cup\tilde{\Sigma}_{\mu}\ri)$ and the contours are shown in Fig.\ref{openlenseswtS}.
\item For $k\in \tilde{\mathcal {C}}_1\cup\tilde{{\mathcal C}}_2\cup \tilde{\Sigma}_{1}\cup\tilde{\Sigma}_{2}\cup\tilde{\Sigma}_{H}\cup\tilde{\Sigma}_{H^*} \cup\tilde{\Sigma}_{\mu}$, the boundary values $\tilde{S}_{\pm}(k)$ satisfy the jump relation
\begin{equation}
\tilde{S}_+(k)=\tilde{S}_-(k)\begin{cases}
\displaystyle \begin{pmatrix} 1 &  -\frac{ie^{-2it\tilde{g}(k)}} {2\hat{r}(k)\tilde{f}^2(k) }  \\0 & 1 \end{pmatrix} , &\quad k \in  \tilde{\mathcal{C}}_{1},\\
\displaystyle \begin{pmatrix} 1 &0\\ -\frac{i\tilde{f}^2(k)} {2\hat{r}(k)}  e^{2it\tilde{g}(k) }&1 \end{pmatrix}, & \quad k \in \tilde{\mathcal{C}}_{2}, \\
\displaystyle \begin{pmatrix} 1  & 0\\  { 2i {r}(k) }{\tilde{f}_{+}(k)\tilde{f}_{-}(k)}e^{2it\tilde{g}(k)} & 1 \end{pmatrix}, &\quad k \in  \tilde{\Sigma}_{H},\\
\displaystyle \begin{pmatrix} 1 & \frac{2i\bar{r}(k)}{\tilde{f}_{+}(k)\tilde{f}_{-}(k)} e^{-2it \tilde{g}(k)}\\0 & 1  \end{pmatrix}, & \quad k \in \tilde{\Sigma}_{H^*}, \\
\displaystyle \begin{pmatrix} 0  & i\\i & 0 \end{pmatrix}, & \quad k \in \tilde{\Sigma}_{1}\cup\tilde{\Sigma}_{2}, \\
\displaystyle\begin{pmatrix} e^{i(\tilde{\Omega} t + \tilde{ \Delta} ) } &0 \\ 0 & e^{-i(\tilde \Omega t + \tilde{ \Delta} ) }  \end{pmatrix}, &\quad k \in   \tilde{\Sigma}_{\mu}.
 \end{cases}
\end{equation}
\item
$
\tilde{S}(k)=I + \mathcal{O}\le(\frac{1}{k}\ri), \quad k \to \infty.
$
\end{enumerate}
\end{RHP}

\begin{figure}
\centering
\scalebox{.75}{
\begin{tikzpicture}[line cap=round,line join=round,>=triangle 45,x=1cm,y=1cm,>=stealth]
\clip(-8.42,-5.18) rectangle (8.42,5.18);
\draw [samples=50,domain=-0.99:0.99,rotate around={0:(-2.6,0)},xshift=-2.6cm,yshift=0cm,line width=1pt] plot ({0.23*(1+(\x)^2)/(1-(\x)^2)},{1.05*2*(\x)/(1-(\x)^2)});
\draw[->- = .7,thick] (-5,3)-- (-2,2.4);
\draw[->- = .7,thick](-2,-2.4)--(-5,-3);
\draw [line width=1pt,dashed] (-1,-5.18) -- (-1,5.18);
\draw [line width=1pt,domain=-8.42:8.42] plot(\x,{(-0-0*\x)/3.54});
\draw[->- = .7,thick] (-5,3)-- (-3.3,3.589);
\draw[->- = .7,thick]  (-3.3,3.589)--(-2,2.4);
\draw[->- = .7,thick] (-5,3)-- (-3.7,1.76);
\draw[->- = .7,thick]  (-3.7,1.76)--(-2,2.4);

\draw[->- = .7,thick] (-2,-2.4)-- (-3.3,-3.589);
\draw[->- = .7,thick]  (-3.3,-3.589)--(-5,-3);
\draw[->- = .7,thick] (-2,-2.4)-- (-3.7,-1.76);
\draw[->- = .7,thick]  (-3.7,-1.76)--(-5,-3);
\draw[color=black] (-2,0.4) node {$\tilde{\Sigma}_{\mu}$};
\begin{scriptsize}
\draw [fill=black] (-5,3) circle (1pt);
\draw[color=black] (-5.2,2.9) node {$E$};
\draw [fill=black] (-2,2.4) circle (1pt);
\draw[color=black] (-1.8,2.5) node {$H$};
\draw [fill=black] (-1.4,1.2) circle (1pt);
\draw[color=black] (-1.4,1) node {$F$};
\draw[color=black] (-1.4,2) node {$\tilde{\Sigma}_{H}$};
\draw[->- = .7,thick,dashed]   (-2,2.4)--(-1.4,1.2);
\draw [fill=black] (-5,-3) circle (1pt);
\draw[color=black] (-5.2,-2.9) node {$E^*$};
\draw [fill=black] (-2,-2.4) circle (1pt);
\draw[color=black] (-1.7,-2.54) node {$H^*$};
\draw [fill=black] (-1.4,-1.2) circle (1pt);
\draw[color=black] (-1.4,-1) node {$F^*$};
\draw[color=black] (-1.4,-2) node {$\tilde{\Sigma}_{H^*}$};
\draw[->- = .7,thick,dashed]   (-2,-2.4)--(-1.4,-1.2);

\draw[color=black] (-2.7,-0.3) node {$\mu$};
\draw[color=black] (-0.7,-0.3) node {$-\xi$};
\draw[color=black] (-3,2.85) node {$\tilde{\Sigma}_{1}$};
\draw[color=black] (-3,-2.9) node {$\tilde{\Sigma}_{2}$};
\draw[color=black] (-4,3.7) node {$\tilde{\mathcal{C}}_{1}$};
\draw[color=black] (-4.5,2.2) node {$\tilde{\mathcal{C}}_{1}$};
\draw[color=black] (-4,-3.7) node {$\tilde{\mathcal{C}}_{2}$};
\draw[color=black] (-4.5,-2.2) node {$\tilde{\mathcal{C}}_{2}$};
\end{scriptsize}
\end{tikzpicture}
}
\caption{Opening lenses.  }
\label{openlenseswtS}
\end{figure}
From the signature table of the function $\Im (g(k))$ as shown in Fig.\ref{signtable2}, we find that the following inequalities hold:
\begin{eqnarray}
&&\Re [2 i\tilde{g}(k)]>0 \ \mbox{ for }k \in \tilde{\mathcal{C}}_{1} \backslash \{E,H \} \ ,
\label{teq1}\\
&&\Re [2i \tilde{g}(k)] <0 \ \mbox{ for }k \in \tilde{\mathcal{C}}_{2} \backslash \{ E^*,H^* \} \ , \label{teq2}\\
&&\Re [2i\tilde{g}(k)]< 0 \mbox{ for } k \in \tilde{\Sigma}_{H} \ , \label{teq3}\\
&& \Re [2i\tilde{g}(k)]> 0 \mbox{ for } k \in \tilde{\Sigma}_{H^*}\label{teq4}.
\end{eqnarray}
By above inequalities, the jump matrices of $\tilde{S}(k;x,t)$ exhibit exponential convergence towards constant jumps in regions exterior to the vicinities of $E$, $E^*$, $H$ and $H^*$ as $t\to +\infty$. We then obtain the model RH problem for $\tilde{S}^{\infty}(k;x,t)$.
\begin{RHP}\label{RHP8} Find a matrix-valued function $\tilde{S}^{\infty}(k;x,t)$ with the following properties
\begin{enumerate}
\item $\tilde{S}^{\infty}(k;x,t)$ is analytic for $k\in\C \backslash \le( \tilde{\Sigma}_{1}\cup \tilde{\Sigma}_{2}\cup\tilde{\Sigma}_{\mu}\ri)$.
\item For $\tilde{\Sigma}_{1}\cup \tilde{\Sigma}_{2}\cup\tilde{\Sigma}_{\mu}$, the boundary values $\tilde{S}^{\infty}_{\pm}(k;x,t)$ satisfy the jump relation
\begin{gather}
\label{Stinfinity1}
\tilde{S}^{\infty}_+(k) = \tilde{S}^{\infty}_-(k)
\begin{cases}
\begin{pmatrix} e^{i(t\tilde{\Omega}+\tilde{ \Delta}) } &0 \\ 0 & e^{-i(t\tilde{\Omega}+\tilde{ \Delta}) }\end{pmatrix},  & k \in \tilde{\Sigma}_{\mu},  \\
 \begin{pmatrix}0 & i \\ i & 0 \end{pmatrix}, &k \in \tilde{\Sigma}_{1}\cup\tilde{\Sigma}_{2},
  \end{cases}
\end{gather}
\item
$
\tilde{S}^{\infty}(k)=I+\mathcal{O}\le(\frac{1}{k}\ri),\quad k\to\infty.
$
\end{enumerate}
\end{RHP}
In order to obtain the solution of the model RH problem for $\tilde{S}^{\infty}(k;x,t)$, we start by introducing the holomorphic differential
\begin{eqnarray}
\label{HD}
\tilde{\omega}=\left(2\int_{H}^{H^*}\frac{\mathrm{d}k}{\tilde{R}(k)}\right)^{-1}\frac{\mathrm{d}k}{\tilde{R}(k)} \, ,
\end{eqnarray}
where $\tilde{R}(k)=\sqrt{(k-E)(k-E^*)(k-H)(k-H^*)}$ which has the asymptotics $\tilde{R}(k)=k^2(1+\mathcal{O}(\frac{1}{k}))$ as $k\to \infty$.
Utilizing equation \eqref{HD}, we introduce the integral
\begin{equation}\label{tAk}
\tilde{A}(k)=\int_{E}^{k}\tilde{\omega},\quad k\in\mathbb{C}\backslash \le( \tilde{\Sigma}_{1}\cup\tilde{\Sigma}_{2}\cup\tilde{\Sigma}_{\mu} \ri),
\end{equation}
and define the period $$\tilde{\tau}=\left(\int_{H}^{H^*}\frac{\mathrm{d}k}{\tilde{R}(k)}\right)^{-1}\int\limits_{E}^{H}\frac{\mathrm{d}k}{\tilde{R}(k)}.$$
Then we define the function $\tilde{\gamma}(k)$
\begin{align}
\tilde{\gamma}(k)=\le(\dfrac{(k-H^*)(k-E^*)}{(k-H)(k-E)}\ri)^{\frac{1}{4}},
\end{align}
and the function $\tilde{\gamma}(k)+\frac{1}{\tilde{\gamma}(k)}$ has a zero at $\tilde{E}_{0}=\frac{EH-E^*H^*}{E+H-E^*-H^*}$.
By the methodology established in section \ref{sec:3}, we get the solution of the  RH problem \ref{RHP8}:
\begin{equation}
\label{TildeSinfty_sol}
\begin{split}
\tilde{S}^{\infty}_{11}(k;x,t)=\frac{1}{2}(\tilde\gamma(k)+\frac{1}{\tilde\gamma(k)})\dfrac{\vartheta_3(\tilde{A}(k)-\tilde{A}(\tilde{E}_{0})-\frac{\tilde{\tau}}{2}-\frac{t \tilde{\Omega}+\tilde{\Delta}}{2\pi };\tilde{\tau})}{\vartheta_3(\tilde{A}(k)-\tilde{A}(\tilde{E}_{0})-\frac{\tilde{\tau}}{2}-\frac{1}{2};\tilde{\tau})}
\dfrac{\vartheta_3(\tilde{A}(\infty)-\tilde{A}(\tilde{E}_{0})-\frac{\tilde{\tau}}{2}-\frac{1}{2};\tilde{\tau})}{\vartheta_3(\tilde{A}(\infty)-\tilde{A}(\tilde{E}_{0})-\frac{\tilde{\tau}}{2}-\frac{t \tilde{\Omega}+\tilde{\Delta}}{2\pi };\tilde{\tau})},\\
\tilde{S}^{\infty}_{12}(k;x,t)=\frac{1}{2}(\tilde\gamma(k)-\frac{1}{\tilde\gamma(k)})\dfrac{\vartheta_3(\tilde{A}(k)+\tilde{A}(\tilde{E}_{0})+\frac{\tilde{\tau}}{2}+\frac{t \tilde{\Omega}+\tilde{\Delta}}{2\pi };\tilde{\tau})}{\vartheta_3(\tilde{A}(k)+\tilde{A}(\tilde{E}_{0})+\frac{\tilde{\tau}}{2}+\frac{1}{2};\tilde{\tau})}
\dfrac{\vartheta_3(\tilde{A}(\infty)-\tilde{A}(\tilde{E}_{0})-\frac{\tilde{\tau}}{2}-\frac{1}{2};\tilde{\tau})}{\vartheta_3(\tilde{A}(\infty)-\tilde{A}(\tilde{E}_{0})-\frac{\tilde{\tau}}{2}-\frac{t \tilde{\Omega}+\tilde{\Delta}}{2\pi };\tilde{\tau})},\\
\tilde{S}^{\infty}_{21}(k;x,t)=\frac{1}{2}(\tilde\gamma(k)-\frac{1}{\tilde\gamma(k)})\dfrac{\vartheta_3(\tilde{A}(k)+\tilde{A}(\tilde{E}_{0})+\frac{\tilde{\tau}}{2}-\frac{t \tilde{\Omega}+\tilde{\Delta}}{2\pi };\tilde{\tau})}{\vartheta_3(\tilde{A}(k)+\tilde{A}(\tilde{E}_{0})+\frac{\tilde{\tau}}{2}+\frac{1}{2};\tilde{\tau})}
\dfrac{\vartheta_3(\tilde{A}(\infty)-\tilde{A}(\tilde{E}_{0})-\frac{\tilde{\tau}}{2}-\frac{1}{2};\tilde{\tau})}{\vartheta_3(\tilde{A}(\infty)-\tilde{A}(\tilde{E}_{0})-\frac{\tilde{\tau}}{2}+\frac{t \tilde{\Omega}+\tilde{\Delta}}{2\pi };\tilde{\tau})},\\
\tilde{S}^{\infty}_{22}(k;x,t)=\frac{1}{2}(\tilde\gamma(k)+\frac{1}{\tilde\gamma(k)})\dfrac{\vartheta_3(\tilde{A}(k)-\tilde{A}(\tilde{E}_{0})-\frac{\tilde{\tau}}{2}+\frac{t \tilde{\Omega}+\tilde{\Delta}}{2\pi };\tilde{\tau})}{\vartheta_3(\tilde{A}(k)-\tilde{A}(\tilde{E}_{0})-\frac{\tilde{\tau}}{2}-\frac{1}{2};\tilde{\tau})}
\dfrac{\vartheta_3(\tilde{A}(\infty)-\tilde{A}(\tilde{E}_{0})-\frac{\tilde{\tau}}{2}-\frac{1}{2};\tilde{\tau})}{\vartheta_3(\tilde{A}(\infty)-\tilde{A}(\tilde{E}_{0})-\frac{\tilde{\tau}}{2}+\frac{t \tilde{\Omega}+\tilde{\Delta}}{2\pi };\tilde{\tau})}.
\end{split}
\end{equation}
Thanks to the function $\tilde{g}(k)$ exhibits a $\frac{3}{2}$ root type behavior at the endpoints $H$ and $H^*$, this is
\begin{align}
\tilde{g}(k)=\frac{\tilde{\Omega}}{2}+\mathcal{O}((k-k_3)^{\frac{3}{2}}),\ k\to k_3,\ k_3=H,H^*,
\end{align}
the local parametrix $\tilde{P}^{H} (k)$ and $\tilde{P}^{H^*} (k)$ related to the endpoints $H$ and $H^*$ can be constructed in terms of the Airy functions. Detailed constructions of these local parametrix can be found in \cite{GYJ,Girotti-1,Girotti-2}. The resulting estimate for the matrix-valued function $\tilde{S}(k;x,t)$ is
\begin{align}
\tilde{S}(k;x,t)=\le(I+\mathcal{O}(\frac{1}{t})\ri)\tilde{S}^{\infty}(k;x,t).
\end{align}

Based on these transformations $X(k)\to\tilde{X}(k)\to\tilde{T}(k)\to \tilde{S}(k)$, we can reconstruct the soliton gas $q(x,t)$.  For $k\to \infty$, the original RH problem for ${X}(k)$ is satisfied by
\begin{equation}\label{Eq8}
\begin{array}{l}
 X(k)=\tilde{X}(k) = \tilde{f}^{\sigma_3}_{\infty}e^{it\tilde{g}_{\infty}\sigma_3}\tilde{T}(k) e^{-it(\tilde{g}(k)-\theta(k))\sigma_3} \tilde{f}^{-\sigma_3}(k) = \tilde{f}^{\sigma_3}_{\infty}e^{it\tilde{g}_{\infty}\sigma_3}\tilde{S}(k) e^{-it(\tilde{g}(k)-\theta(k))\sigma_3} \tilde{f}^{-\sigma_3}(k)\\
  =\tilde{f}^{\sigma_3}_{\infty}e^{it\tilde{g}_{\infty}\sigma_3}\tilde{S}^{\infty}(k)\le( I+\mathcal{O}(\frac{1}{t})\ri) e^{-it(\tilde{g}(k)-\theta(k))\sigma_3} \tilde{f}^{-\sigma_3}(k).
\end{array}
\end{equation}
Recall the equation \eqref{u-X}, we obtain
\begin{align}\label{qfromSt}
q(x,t)=2i\tilde{f}^2_{\infty}e^{2it\tilde{g}_{\infty}}\le(\lim\limits_{k\to\infty}k\tilde{S}^{\infty}_{12}(k;x,t)\ri)+\mathcal{O}\le(\dfrac{1}{t}\ri).
\end{align}
Substituting \eqref{TildeSinfty_sol}  into \eqref{qfromSt}, we obtain \eqref{qxt1}.

\section{The unmodulated elliptic wave region: $\xi<\hat{\xi}$}
\label{sec:6}
Now let's consider the sector $\xi<\hat{\xi}$ to this end, we introduce a new $g$-function to analyse the RH problem for $X(k)$.
We need the function $\hat{g}(k)$ satisfies the following properties:
\begin{enumerate}
\item Jump condition:
\begin{align}
&\hat{g}_+(k) + \hat{g}_-(k) = 0,  & k \in {\Sigma}_1\cup {\Sigma}_2\label{hatgjump1},\\
&\hat{g}_+(k)-\hat{g}_-(k) =\hat{\Omega}, & k \in  \Sigma_{F}.\label{hatgjump2}
\end{align}
where the jump contours ${\Sigma}_1$, ${\Sigma}_2$ and $\Sigma_{F}$ see Fig.\ref{openinglenses1}.
\item Asymptotics at infinity:
\begin{align}
\hat{g}(k)-2k\xi-k^2=\hat{g}_{\infty}+\mathcal{O}\Big(k^{-1}\Big),\quad k\to\infty.
\end{align}
\end{enumerate}
In order to get an expression for $\hat{g}(k)$, we first define a new function $\check{g}(k;\hat{\xi})$
\begin{align}
\check{g}(k;\hat{\xi})=2\int\limits_{E}^{k}(\zeta-\mu(\hat{\xi}))\sqrt{\frac{(\zeta-F)(\zeta-F^*)}{(\zeta-E)(\zeta-E^*)}} \d\zeta,
\end{align}
which satisfies
\begin{enumerate}
\item Jump condition:
\begin{align}
&\check{g}_{+}(k) + \check{g}_{-}(k) = 0,  & k \in {\Sigma}_{1}\cup {\Sigma}_{2}, \\
&\check{g}_{+}(k)-\check{g}_{-}(k) =  \check{\Omega}, & k \in  \Sigma_{F},
\end{align}
where $\check{\Omega}=4\int\limits_{E}^{F}(\zeta-\mu(\hat{\xi}))\le(\sqrt{\frac{(\zeta-F)(\zeta-F^*)}{(\zeta-E)(\zeta-E^*)}}\ri)_{+} \d\zeta\in \R.$
\item Asymptotics at infinity
\begin{align}
\check{g}(k)=2k\hat{\xi}+k^2+\check{g}_{\infty}+\mathcal{O}\Big(k^{-1}\Big),\quad k\to\infty,
\end{align}
where
\begin{align}
\check{g}_{\infty}=(\int\limits_{E}^{\infty}+\int\limits_{E^*}^{\infty})[(\zeta-\mu(\hat{\xi}))\sqrt{\frac{{(\zeta-F)(\zeta-F^*)}}{{(\zeta-E)(\zeta-E^*)}}}-\hat{\xi}-\zeta]\d\zeta+E^2_2-E^2_1-2E_1\hat{\xi}\in \R.
\end{align}
\end{enumerate}
Utilizing the functions $\check{g}(k)$ and $g(k)$, where $g(k)$ is given by \eqref{gk}, we define $\hat{g}(k)$ explicitly as follows:
\begin{align}
\hat{g}(k)=\check{g}(k)+2(\xi-\hat{\xi})g(k),
\end{align}
then we have
\begin{align}
&\hat{\Omega}=\check{\Omega}+2(\xi-\hat{\xi})\Omega,\label{hw}\\
&\hat{g}_{\infty}=\check{g}_{\infty}+2(\xi-\hat{\xi})g_{\infty},\label{hginfty}
\end{align}
where $\Omega$ and $g_{\infty}$ are given by \eqref{Omega} and \eqref{ginfty}, respectively.
Note that near the endpoints $E,E^*,F,F^*$, the function $\hat{g}(k)$ has a square-root vanishing behaviour,
\begin{equation}\label{Eq1B}
\begin{array}{l}
\hat{g}_+(k) -\hat{g}_-(k) = \mathcal{O}\le( \sqrt{k-\alpha}\ri),\quad k \to \alpha,\ \alpha=E,E^*\\
\hat{g}_+(k) -\hat{g}_-(k) - \hat{\Omega} = \mathcal{O}\le( \sqrt{k-\beta}\ri),\quad k \to \beta,\ \beta=F,F^*.
\end{array}
\end{equation}
Combining \eqref{in1}, \eqref{in2}, \eqref{teq1} and \eqref{teq2}, we arrive at
\begin{align}
&\Re{[2i\check{g}(k)]}>0 ,\ \ \Re{[2ig(k)]}<0 \mbox{ for }k \in \mathcal{C}_{1} \backslash \{ E,F\},\\
&\Re{[2i\check{g}(k)]}<0 ,\ \ \Re{[2ig(k)]}>0 \mbox{ for }k \in \mathcal{C}_{2} \backslash \{ E^*,F^*\}.
\end{align}
where the contours $\mathcal{C}_{1}$ and $\mathcal{C}_{2}$ are shown in Fig.\ref{openinglenses1}.
Therefore, as $\xi<\hat{\xi}$, we have
\begin{align}
&\Re [2 i\hat{g}(k)]=\Re{[2i\check{g}(k)]}+2(\xi-\hat{\xi})\Re{[2ig(k)]}>0 \ \mbox{ for }k \in \mathcal{C}_{1} \backslash \{E,F \},\\
&\Re [2 i\hat{g}(k)]=\Re{[2i\check{g}(k)]}+2(\xi-\hat{\xi})\Re{[2ig(k)]}<0 \ \mbox{ for }k \in \mathcal{C}_{2} \backslash \{E^*,F^* \}.
\end{align}
The steps established in the RH analysis for   $x\to -\infty$ can be directly applicable to the current context. The primary modification involves substituting $x\Omega$ with $t\hat{\Omega}$, with $\hat{\Omega}$  defined by \eqref{hw}. Consequently, we arrive at the asymptotic formula \eqref{qxt2}.

\section*{Acknowledgments}
This work is supported by the National Natural Science Foundation of China (Grant Nos. 12471234, 12201572, 12401320) and Science Foundation of Henan Academy of Sciences (Grant No. 20252319002).

\end{document}